\newcommand{\eps}{\varepsilon}
\newtheorem{theorem}{Theorem}
\newtheorem{lemma}{Lemma}
\theoremstyle{remark}
\newcommand{\ETH}{\texttt{ETH}}
\newcommand{\USDC}{\texttt{USDC}}
\newcommand\blfootnote[1]{
    \begingroup
    \renewcommand\thefootnote{}\footnote{#1}
    \addtocounter{footnote}{-1}
    \endgroup
}
\title{Fair Interest Rates Are Impossible for Lending Pools: Results from Options Pricing}
\author{Joe Halpern$^1$, Rafael Pass$^{1,2}$, Aditya Saraf$^1$\blfootnote{Joe Halpern and Aditya Saraf were supported in part by NSF grant FMitF-2319186, ARO grant W911NF-22-1-0061, MURI grant W911NF-19-1-0217, and a grant from the Cooperative AI Foundation.}}
\date{$^1$Cornell University\\
$^2$Tel-Aviv University\\
[2ex]
\today}
\begin{document}

    \maketitle
\begin{abstract}
Cryptocurrency \emph{lending pools} are services that allow lenders to pool
together assets in one cryptocurrency and loan it out to borrowers
who provide collateral worth \textit{more} (than the loan) in a
separate cryptocurrency. Borrowers can repay their loans to reclaim
their collateral unless their loan was \textit{liquidated}, which
happens when the value of the collateral dips significantly. Interest rates for these pools are currently set via supply and demand heuristics, which have several downsides, including inefficiency, inflexibility, and being vulnerable to manipulation. Here, we reduce lending pools to options, and then use
ideas from options pricing to search for fair interest rates for lending
pools. In a simplified model where the loans have a fixed duration and can only be repaid at the end of the term, we obtain analytical pricing results. We then consider a more realistic model, where loans can be repaid dynamically and without expiry. Our main theoretical contribution is to show that fair interest
rates \textit{do not exist} in this setting. We then show that impossibility results generalize even to models of lending pools which have no obvious reduction to options. To address these negative results, we introduce
a model of lending pools with fixed fees, and model the ability of
borrowers to top-up their loans to reduce the risk of liquidation. As
a proof of concept, we use simulations to show how our model's
predicted interest rates compare to interest rates in practice. 
\end{abstract}


\section{Introduction}
Decentralized finance (DeFi) involves the use of blockchain
technology, including cryptocurrencies and smart contracts, to provide
decentralized alternatives to traditional financial services like
lending, currency exchanges, or derivatives trading. \emph{Lending pools} are
DeFi services where several users (the lenders) pool together their
cryptocurrency to loan out, charging interest to borrowers.
Traditional loans are often collateralized by non-liquid assets, such as a car, house or business. But DeFi lending pools are completely decentralized and anonymous, and so lending pools require borrowers to provide liquid collateral worth more than the loan
they receive. This collateral is provided in a different
cryptocurrency -- typically, lenders will pool together a more stable
currency, like \USDC{} (USD Coin), and accept a more volatile currency, like
\ETH{} (Ethereum). Because lending pool loans are overcollateralized,
they do not 
serve the typical purpose of traditional loans: increasing one's
liquidity. Instead these lending pool loans can be used to increase
the borrower's exposure to the cryptocurrency deposited as collateral.  

Consider the example of a lending pool that gives out loans of \USDC{}
and holds \ETH{} as collateral. Suppose the current market rate is
$1\ \ETH{} = 100\ \USDC{}$ and that the \textit{overcollateralization
  parameter}, $c$, is $1.5$; this means that depositing $1$ \ETH{} as
collateral would yield a loan of $100/1.5 \approx 67$ \USDC{}. The
borrower can then purchase $0.67\ \ETH$ on the market. If
\ETH\ rises to $110$ \USDC, the borrower can sell their $0.67$
\ETH\ for $73.7$ \USDC\ on the market, and then repay their loan,
spending $67$ \USDC\ to reclaim their $1$ \ETH. Their net worth
in \USDC\ would then be $110 + 73.7 - 67 = 116.7$, an increase of $16.7\%$
from a $10\%$ rise in the value of \ETH, due to their increased
exposure to \ETH. 

Lending pools loans also have a risk of \textit{liquidation}. The
\textit{health factor} of a lending pool loan is the ratio of the
value of the collateral to the value owed. This starts near the
overcollateralization parameter when the loan is initialized, but
could decrease if the collateral dips in value. If the health factor
drops below a liquidation parameter, which we denote $c_0$,
then the loan is subject to liquidation: the borrower's collateral is
sold to third parties at a discount. Typically, $c_0 > 1$, so that
even after a discount, the pool can recoup the value of the
loan.

In practice, lending pools are very popular. Platforms like Aave and
Compound have over \$10.6B USD of total value locked, which is over
$1/8$ of the total value locked in all DeFi protocols at the time of
writing.\footnote{\url{https://www.stelareum.io/en/defi-tvl/lending.html}} But one important feature of lending pools that was omitted in the previous example is that they charge interest rates to borrowers. So, in the example above, the borrower would have to pay slightly more than $67$ \USDC\ to reclaim their $1$ \ETH\ collateral. How should these interest rates be set? In practice, lending pools use supply and demand to set these interest rates. Lenders pool together currency, and increase the interest rate for borrowing that currency as more of it is \textit{utilized} (i.e., lent out). 
This is typically done via piecewise linear functions: the interest rate slowly increases until an inflection point, after which the interest more rapidly increases. This approach, though simple, has many disadvantages. 

One issue is that these utilization functions are not theoretically justified. This leads to several problems. First, without a proper theory, it's difficult to construct the ``best'' utilization function. Big lending pools sometimes offload this problem -- for example, Compound currently
pays \$2 million a year to the risk management firm Gauntlet to help
them set their interest
rate.\footnote{https://www.comp.xyz/t/gauntlet-compound-renewal-2023/4638} Second, even the ``best'' piecewise linear utilization functions are inefficient: \citet{bastankhah2024thinking} shows that such protocols converge very slowly in the presence of slower users, and never reach the true market equilibrium interest. Finally, it's difficult to tell whether or not the interest rates offered by the protocol are favorable as a potential borrower or lender. A better theoretical foundation will address these issues.

Though the problems described above could be addressed by a better theory of utilization functions, utilization functions have other issues in general. First, large suppliers can manipulate the interest rates by withdrawing some of their supply. Doing so would cause the interest rates to increase, and if borrowers are slow to react, could lead to large profits for the suppliers. \citet{subopt2023} formalizes this intuition, describes the optimal lending strategy, and shows that several large suppliers can increase their profits up to 700\% by selectively withdrawing part of their supply. Second, utilization functions lead to more inflexible protocols. While we have mainly focused on the interest rate, the overcollateralization and liquidation parameters are also important. Lending pools might want to learn fair \textit{bundles} of parameters instead of simply fair interest rates. For example, some borrowers may be willing to pay higher interest rates for more capital upfront, while other borrowers may be willing to pay higher interest rates for a lower risk of liquidation. Utilization functions are ill-suited to this finding fair bundles, as a separate market problem would have to be solved for each bundle of parameters.

In this paper, we take a different approach. As the example above shows, lending pool loans increase the borrower's exposure to the collateral. The analog of this in traditional finance is an option. We replicate lending pool loans with \emph{barrier options}, a type of option
that is nullified if the price of the underlying asset falls below a certain barrier. 
Then we can use techniques inspired by option
pricing to find fair interest rates. This approach addresses the limitations of utilization functions: it's easier to find the interest rate via simulations, and borrowers and lenders can evaluate whether the posted interest rate is favorable to them. Large lenders cannot manipulate the interest rate if it's not based on utilization, and it's also possible to find fair ``bundles'' of parameters.
Note that we are not the first to replicate DeFi lending with options; \citet{sardon2021zero} apply option pricing to a new class of \textit{zero-liquidation} loans that they define.\footnote{They also provide a brief outline on how to apply option pricing to standard DeFi lending pools. Our paper takes a different approach to replication than they suggest, and we prove several impossibility results that show that options pricing does not extend straightforwardly from zero-liquidation loans to standard lending pools.}

Our main technical contribution is to show a strong impossibility result: there are no fair lending-pool parameters in a range of models similar to practice. We start with a simplified model, but show that the impossibility result extends to realistic treatments of liquidation as well as time-varying interest rates. We then investigate a model where borrowers pay lenders a fixed fee, which renders fair parameters possible. We provide some simulation results for this setting.

\subsection{Related work}
There has been a flurry of recent work on problems with interest rate models in lending pools. Some of this work has criticized the structure of the utilization functions as piecewise linear functions. \citet{bastankhah2024thinking}, as mentioned earlier, argue that piecewise linear utilization functions converge very slowly and often don't reach the true market equilibrium. \citet{cohen2023economics} provide a model of lending pools based on borrower and lender demand, and show how to optimally update the interest rates to reach a desired utilization. \citet{rivera2023equilibrium} study a model of lending pools where borrowers and lenders arrive randomly and show that the optimal interest rate model must balance between higher efficiency and lower volatility. Importantly, all of these models still set the interest rate as some function of the utilization. Thus, they are vulnerable to strategic utilization attacks and are poorly suited for finding fair bundles of parameters.

Some papers take a more empirical approach. For example, \citet{gudgeon2020defiprotocolsloanablefunds} find that lenders are very unbalanced; sometimes as few as three accounts can control 50\% of the liquidity. They also show that agents do not react quickly to interest rate changes. These facts further support a move away from utilization-based methods which are subject to strategic attacks. \citet{chaudhary2023interest} show that utilization functions on the lending pool Compound tend to have higher slopes for more volatile cryptocurrencies. This supports the idea that different ``fair'' interest rates should be available for different liquidation thresholds, as varying the liquidation threshold is similar to switching to a more volatile currency.

\citet{chiu2022fragility} take the interest rate function as given, and study a model where borrowers can choose which cryptocurrency they wish to deposit as collateral. They show that social welfare improves if the overcollateralization parameters are dynamic, rather than fixed. Our approach can help platforms vary overcollateralization and interest parameters in tandem.

\subsection{Overview}
Our paper proceeds as follows. We first start by providing an overview
of lending pools, barrier options, and the popular Black-Scholes model
for pricing options. In Section 3, we provide analytic formulas for fair interest rates in fixed-term lending
pools. In Section 4, however, we show that a
simple model of dynamic, perpetual lending pools does not admit fair
interest rates. We even extend this impossibility result to other models which do not admit an obvious option replication, such as models with realistic treatments of liquidation. To circumvent the impossibility results, we add a fixed fee to the model, paid by
the borrower to the lender. The remainder of the paper focuses on pricing results in this fixed-fee model. First, in Section 5, we introduce top-ups, which is when a borrower adds additional collateral during their loan term, to avoid liquidation. We
show how top-ups are trivial\footnote{Without an additional discount parameter, the borrower's best strategy is to always top-up by as much money as they have, at the beginning of the loan.} unless we include an additional discount parameter, which is typically
absent in the Black-Scholes model of option pricing. Top-ups are also trivial if we assume that agents continuously monitor their loans, and so we switch to a model of discrete monitoring. Because of all these complications, we must use
simulations, rather than analytic pricing models, to find fair interest rates.  In Section 6, we 
describe our simulation methodology and measure how different model parameters affect the fair interest rates. Finally, we compare our model's interest rates to real interest rates using empirical data from Aave and historical data for market conditions over the last year.

\section{Lending pools, barrier options, and the Black-Scholes model}
We start by formalizing lending pools and barrier options, and then explain the Black-Scholes model, which will be a starting point for the assumptions we make about market conditions and the price movement of the collateral. 

\subsection{Lending pools}
In this section, we provide a model of lending pools without top-ups; we will incorporate top-ups later. We will assume that the lending pool takes \ETH\ as collateral and gives loans in \USDC. Let $1\ \ETH\ = S_0\ \USDC$ at the time of the loan. Let $c > 1$ be the overcollateralization parameter, $1 < c_0 < c$ be the liquidation parameter, and $\alpha > 0$ be the annual interest rate, or APR.

The basic loan action is to deposit $x$ \ETH\ in the lending pool, and
get $S_0x/c$ \USDC\ as a loan. Without loss of generality, we assume
that $x = 1$ because it is trivial to scale up/down for $x \neq
1$.\footnote{For example, if the borrower wants to take a loan of $2$
\ETH, we can consider that a loan of a synethtic currency $\ETH_2$,
which always has double the price of \ETH.} Let $S_t$ be the price of
\ETH\ in \USDC\ at time $t$. In this paper, we consider two ways to
model the loan term. For now, we assume that the borrower can pay back
the loan only at a fixed time $T$ (assuming that the loan was not
liquidated by then). To do so, they pay $e^{\alpha T}S_0/c$ to reclaim
their $1$ \ETH. We later consider the case
where the borrower can pay back their loan at any time in the future.  

The health factor of a loan at time $t$, $H_t$, is defined as the value of the collateral at time $t$ (in \USDC) divided by the value owed. So: $$H_t = \frac{S_t}{e^{\alpha T}S_0/c} = \frac{S_tc}{e^{\alpha T}S_0}\text{.}$$ Liquidation occurs when $H_t < c_0$, which simplifies to $S_t < e^{\alpha T}S_0c_0/c$. 

We make a number of simplifying assumptions in our model of lending
pools. First, we assume that the liquidation mechanism is a costless,
immediate sale for the lenders. This is not true in practice due to
high market friction during liquidation events (see \cite{empirical_defi} for an empirical evaluation of liquidations; \citet{qin2023mitigating} propose a new protocol to reduce the frequency of liquidations). However, our impossibility results apply even to models with a realistic treatment of liquidations. Second, in practice one can often repay partial
loans via lending pools. For example, if you deposit \ETH\ and
\ETH\ goes up in price, then you can withdraw collateral until the
health factor of your loan equals $c$. We consider only full
repayment. This is not a significant limitation, since repaying
a full loan and taking out a smaller loan is a way to simulate partial
repayment. Third, real lending pools, like Aave, often
let users borrow many types of currencies, and store many types of
currencies as collateral. We consider a simple setting, with only
\USDC\ as collateral and \ETH\ that can be borrowed. 

\subsection{Barrier options}
Call options are financial contracts that allow the buyer of the
option to purchase a stock (often called the \textit{underlying}
asset) at some time in the future at a strike/exercise price of
$E$. In this paper, we focus on \textit{barrier} options, a type of
path-dependent\footnote{The value of a standard option depends only on
the price of the underlying asset when the option is exercised; this
is not the case for path-dependent options, which makes pricing such
options more complicated.} option which is either called on or off if
the price of the underlying breaches a barrier, $B$. We are interested
in \textit{down-and-out} call options, where the option is called off
(i.e., canceled) if the price of the underlying falls below the
barrier $B$. The \textit{value} of the option, $v$, is the price that
the option buyer pays to the option seller to purchase the
option. Calculating the fair value of options, under certain
assumptions, is a central question in financial mathematics.  

There are two main types of options, with different exercise
modalities: European options 
and American options. European options have a finite time horizon $t$,
and allow the holder to purchase the stock at exactly $t$. American
options have a potentially infinite time horizon $t$, and allow the
holder to purchase the stock at any time before $t$. If $t = \infty$,
the option is called a \textit{perpetual} American option. Fixed-term
lending pool loans are most similar to European options, but when we
later consider dynamic-term lending pools, we will switch to American
options for replication. 

\subsection{Black-Scholes}
The Black-Scholes model is the foundational (and Nobel-awarded) model
for option-pricing \citep{black-scholes}. It paved the way for the
explosion of options trading, as before their model it was hard for
option buyers and sellers to agree on fair prices. The model makes
assumptions about market conditions, the price movement of the
underlying asset, and utility functions. We highlight the key
assumptions below. 

First, the model assumes that the market is frictionless, which means
that trading can be done without any transaction costs. We adopt this
assumption in our work as well; in reality, transaction costs are an
important dynamic variable for cryptocurrency applications. Second, it
assumes that the market is complete: this means that any ``fair'' trade can
be executed instantly, even for fractional stocks.  Third, it  assumes the
existence of a constant risk-free rate $r > 0$ such that $1$ dollar
now will be worth $e^{rt}$ dollars in $t$ years. In practice, the
10-year US treasury rate is often used as the risk-free rate.  
%
Fourth, the model assumes that agents do not discount the value of
future money beyond the risk-free rate $r$. Of course, agents do
discount the value of future money at a rate of at least $r$, which
essentially means that they calculate their utility for future prices
in today's dollars. We later argue that discounting at a rate higher
than $r$ is important to model top-ups in a nontrivial way.  

Fifth, for the price of the asset, the model assumes an exogenously
given market 
price that follows geometric Brownian motion. This roughly means that
the price varies continuously by multiplying an initial price by a
normally distributed variable. Geometric Brownian motion has a drift
parameter, which describes how the price changes on average over
time. For our purposes, we also assume that the drift of
the stock matches the risk-free rate.
So in expectation, the value of the asset (the price in today's dollars) stays the same over
time. Finally, they make use of the volatility $\sigma_T$ of the asset,
which is the standard deviation of the asset's logarithmic
returns over time period $T$.\footnote{For an asset with price path
following geometric Brownian motion, $\sigma_T$ can be calculated from
the yearly returns $\sigma_{\text{annual}}$ via $\sigma_T =
\sigma_{\text{annual}} \sqrt{T}$.} 

\section{Analytic pricing for fixed-term lending pools}
In this section, we first replicate a lending pool loan using an
option with certain parameters. We use existing option-pricing results to find the fair price of that option, thereby obtaining an analytical function for the fair interest rates in this model.

\subsection{Replication}
\textit{Financial replication} of a given asset is when the cash flow of the asset is replicated by other financial instruments. Here, we will replicate the cash flow of a lending pool loan via a barrier option. 
\begin{theorem}
    \label{european_replication}
    A European option down-and-out call option with exercise price $E = e^{\alpha T}S_0/c$, value $v = (1-1/c)S_0$ and barrier $B = e^{\alpha T}S_0c_0/c$ can be used to replicate both the borrower and lender's utility in a fixed-term lending pool.
\end{theorem}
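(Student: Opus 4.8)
The plan is to show that the state-contingent cash flows of the lending pool — for the borrower and for the lender, at origination, at maturity $T$, and at any liquidation time $\tau\in(0,T)$ — coincide exactly with those of an explicit portfolio assembled from tradable instruments (spot \ETH, \USDC\ earning the risk-free rate $r$) together with one unit of the stated down-and-out call transacted at price $v$. Two positions with identical cash flows in every state induce identical utility for any agent, so cash-flow matching is all that is needed.

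First I would pin down the pool's cash flows and extract the one structural observation that drives everything. The borrower brings $1$ \ETH\ (worth $S_0$), receives $S_0/c$ \USDC, and retains the right to pay $E=e^{\alpha T}S_0/c$ at time $T$ to recover the \ETH, \emph{provided} the loan survives; liquidation fires the instant $S_t<B$ with $B=e^{\alpha T}S_0c_0/c$, at which point the collateral is sold at market price $B$ (costless-liquidation assumption), all of which the lender keeps. The key point is $B=c_0E>E$ since $c_0>1$: on the survival event $\{\min_{t\le T}S_t\ge B\}$ we have $S_T\ge B>E$, so the borrower always exercises the repayment option, and its time-$T$ payoff is exactly $\max(S_T-E,0)\,\mathbf{1}[\min_{t\le T}S_t\ge B]$ — precisely a down-and-out call with strike $E$ and barrier $B$. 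Correspondingly the lender receives $E$ at $T$ on survival, and $B$ at $\tau$ (then growing at $r$) on liquidation.

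Next I would exhibit the two portfolios. \emph{Borrower:} sell the $1$ \ETH\ for $S_0$, buy the down-and-out call for $v$, hold the residual $S_0-v$ in \USDC; the call's payoff reproduces the repayment option exactly (previous paragraph), and the borrower does nothing at liquidation (the knocked-out call vanishes, just as the \ETH\ is lost), so the only remaining constraint is that the residual cash equal $S_0/c$, forcing $v=(1-1/c)S_0$. \emph{Lender:} go short one down-and-out call, receiving $v$, add it to the $S_0/c$ \USDC\ to be lent, buy $1$ \ETH\ with the total $v+S_0/c$, and stand ready to sell that \ETH\ the instant $S_t$ touches $B$; on survival the short call is exercised, the lender delivers the \ETH\ and collects $E$ (matching the pool), and on knockout the stop-sale yields $B$ at $\tau$ while the short call expires worthless (matching the pool). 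The time-$0$ books require $v+S_0/c=S_0$, again $v=(1-1/c)S_0$ — the same condition closes both sides simultaneously, which is what makes the stated parameters the correct ones.

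The main obstacle is the lender's side at liquidation: a down-and-out call, once knocked out, simply disappears, so a naive "short a down-and-out call plus $1$ \ETH" does not reproduce the pool — the lender additionally needs the stop-loss at $B$ so that the knockout converts the \ETH\ into exactly $B$ \USDC\ at time $\tau$, matching the liquidation sale (and this trade is legitimate in a frictionless market). Handling that interaction carefully, together with the harmless measure-zero caveats (continuous paths actually attaining $B$, the event $\{S_T=B\}$), is the only delicate part; once it is in place the verification is just bookkeeping, and the coincidence of the single consistency condition $v=(1-1/c)S_0$ on both sides completes the argument.
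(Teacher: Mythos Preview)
Your proposal is correct and follows essentially the same route as the paper: sell the \ETH, buy the down-and-out call, and hold the residual \USDC\ for the borrower; sell the call, combine the premium with the \USDC\ to be lent, buy $1$ \ETH, and liquidate it at the barrier for the lender. The only cosmetic difference is that you \emph{derive} $v=(1-1/c)S_0$ as the consistency condition closing both books, whereas the paper posits it and then verifies the match; your explicit remark that the lender needs the stop-sale at $B$ (since the knocked-out short call alone leaves the \ETH\ position open) is exactly what the paper does without flagging it as an obstacle.
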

\begin{proof}
    First, note that the borrower in a fixed-term lending pool chooses
    to repay at time $T$ if $e^{-rT} S_T > e^{-rT} e^{\alpha
      T}S_0/c$. If the loan was not liquidated, then it must be the
    case that $S_T > e^{\alpha T}S_0 c_0/c$ (the liquidation
    threshold). Thus, if the loan survives until $T$ (without
    liquidation), the borrower will always choose to repay.  
This means that the utility (in \USDC) of the loan for the borrower depends on
 which of the following two cases occurs: (1) the borrower repay the
 loan at $T$, or (2) the 
loan is liquidated, and so the borrower simply keeps the initial loan
amount.  Hence, the borrower's utility is
$$u = \begin{cases}
  S_0/c & \text{if the loan is liquidated ($S_t < \frac{e^{\alpha T}S_0c_0}{c}$), } \\
  S_0/c + e^{-rT}(S_T - e^{\alpha T} S_0 /c) &
        \text{if the loan repaid at time $T$.} 
\end{cases}$$

To replicate this, consider a European down-and-out call option with
parameters $E = e^{\alpha T}S_0/c$,  $v = (1 - 1/c) S_0$,  $B =
\frac{e^{\alpha T}S_0c_0}{c}$.  First, the option buyer will sell his
initial $1$ \ETH\ for $S_0$ \USDC, and then buy the option above. Then
they will simply exercise the option at the end of their term if the
option survives until then. If the barrier is breached (at $S_t <
\frac{e^{\alpha T}S_0c_0}{c}$), the buyer would have his initial 
utility of $S_0 - (1-1/c)S_0  = S_0/c$. If the option is exercised, the buyer
would have a utility of $S_0/c + e^{-rT}(S_T - e^{\alpha T}S_0/c)$. This
matches the utility of the lending pool borrower. 

We now consider the perspective of the lender. The lender's utility
depends on the same cases as the buyer: (1) the loan is repaid at $T$,
(2) the loan is liquidated (in which case the lender simply sells the
collateral at the liquidation-threshold price). Thus, their utility
is: 
$$u = \begin{cases}
  e^{-rT}e^{\alpha T} S_0 /c - S_0/c  & \text{if the loan is repaid,}\\
    e^{-rt}\frac{e^{\alpha T}S_0c_0}{c} - S_0/c & \text{if the loan is
    liquidated at time $t$.} 
\end{cases}$$

We replicate the lender's utility using the same option with the same
parameters. The option seller must convert the price of the option
(paid in \USDC) to \ETH\ immediately and hold that. Then they
purchase $1/c$ \ETH\ on the spot market immediately, in order to hold
$1$ \ETH. If the option is liquidated, they immediately sell their $1$
\ETH\ on the market.
%
If the option is exercised, they receive a payment of $e^{\alpha
T}S_0/c$, so their total utility is $e^{-rT}e^{\alpha T}S_0/c -
  S_0/c$. If the option breaches the barrier at time $t$, they
immediately sell for $e^{\alpha T}S_0c_0/c$, so their final
  utility is $e^{-rt}e^{\alpha T}S_0c_0/c - S_0/c$. This matches the
utility of the lender.  
\end{proof}

Taking a step back, we've replicated the utility of a lending pool
loan using an option with parameters $v = (1 - 1/c) S_0 $, $E = e^{\alpha T}S_0 /c$, and $B = \frac{e^{\alpha T}S_0c_0}{c}$. How is this
helpful? There has been extensive research on how to fairly price
options, where ``pricing'' means determining the fair value $v$ given
the other option parameters and some market parameters. For European
barrier options, there even exist analytic pricing formulas for the
fair value $v$, so we would simply set that equal to $(1 - 1/c)
S_0$ to solve for the parameter we are interested in (e.g., $\alpha$).  


\subsection{Pricing fixed-term lending pools via European barrier options}
The pricing formulas for European barrier options depend on whether the barrier price is higher or lower than the exercise price. In our example, it's easy to see that the barrier price is always higher than the exercise price, since $c_0 > 1$. 

\begin{theorem}[Adapted from \cite{Haug_2007}]
    \label{european_value}
    Suppose that the risk-free rate is $r$ and the volatility of the asset is $\sigma$. A European down-and-out call option of duration $T$ with parameters $E = e^{\alpha T}S_0/c$ and $B = e^{\alpha T}S_0c_0/c$ has value:
    \begin{align*}
        v = S_0e^{-rT}\left(N(\eta_1) - \frac{e^{\alpha T}}{c}N(\eta_1-\sigma_T) - \frac{e^{\alpha T}c_0}{c}N(\eta_2) + \frac{N(\eta_2 - \sigma_T)}{c_0}\right),
    \end{align*}
    where:
    \begin{align*}
        N(\cdot)&\text{ is the CDF of a standard normal,} \\
        \eta_1 &= \frac{\log\left(\frac{c}{e^{\alpha T}c_0}\right)}{\sigma_T} + \frac{\sigma_T}{2}, \text{ and } \\
        \eta_2 &= \frac{\log\left(\frac{e^{\alpha T}c_0}{c}\right)}{\sigma_T} + \frac{\sigma_T}{2}.
    \end{align*}
\end{theorem}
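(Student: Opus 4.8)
The statement is a specialization of a classical barrier-option formula, so rather than re-deriving everything from scratch I would reduce it to the closed form in \cite{Haug_2007}. The first step is to pick out the correct branch of that formula: the price of a down-and-out call is one of two expressions depending on whether the strike lies above or below the barrier, and here $B/E = c_0 > 1$, so the barrier sits \emph{above} the strike. In this regime the option can reach maturity alive only if $\min_{t\le T} S_t > B > E$, which already forces $S_T \ge \min_{t\le T} S_t > E$; hence on the survival event the terminal payoff is simply $S_T - E$, with no separate ``in-the-money'' indicator, and the value can be written as $v = e^{-rT}(\mathbf{E}[S_T \mathbf{1}\{m_T > B\}] - E\cdot\Pr[m_T > B])$, where $m_T := \min_{t\le T} S_t$.

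I would then evaluate these two quantities. Under the model the pricing-measure log-return $\log(S_T/S_0)$ is Gaussian with variance $\sigma_T^2$, and because the asset's expected value (in today's dollars) is unchanged, its mean is $-\sigma_T^2/2$; this is precisely the zero cost-of-carry specialization of the generalized Black--Scholes inputs, and it is what makes the final formula so clean. With this drift, the ``reflected'' terms of the barrier formula carry the elementary powers $(B/S_0)^{+1}$ and $(B/S_0)^{-1}$ instead of the generic $(B/S_0)^{\pm 2r/\sigma^2}$, and $(B/S_0)^{\pm 1} = (e^{\alpha T}c_0/c)^{\pm 1}$. Both $\Pr[m_T > B]$ and $\mathbf{E}[S_T \mathbf{1}\{m_T > B\}]$ then follow from the reflection principle for Brownian motion with drift: the first is a running-minimum tail probability, and the second reduces to another such probability after a Girsanov shift that re-centers the drift from $-\sigma_T^2/2$ to $+\sigma_T^2/2$ (which is exactly why the $S_T$-weighted term contributes $N(\eta_1), N(\eta_2)$ while the probability term contributes $N(\eta_1 - \sigma_T), N(\eta_2 - \sigma_T)$). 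Substituting $\log(B/S_0) = \alpha T + \log(c_0/c)$ turns the four normal-CDF arguments into exactly $\eta_1$, $\eta_1 - \sigma_T$, $\eta_2$, and $\eta_2 - \sigma_T$.

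The last step is pure bookkeeping: substitute $E = e^{\alpha T}S_0/c$ and $B = e^{\alpha T}S_0 c_0/c$ into the four terms, pull out the common factor $S_0 e^{-rT}$, and simplify the coefficient of $N(\eta_2 - \sigma_T)$, where the product of $E/S_0 = e^{\alpha T}/c$ with the reflection factor $(B/S_0)^{-1} = c/(e^{\alpha T}c_0)$ collapses to $1/c_0$. The remaining coefficients are $1$, $-e^{\alpha T}/c$, and $-e^{\alpha T}c_0/c$, which gives the displayed expression for $v$.

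The main obstacle is not any single calculation but getting the drift and discounting conventions aligned --- recognizing that the relevant Black--Scholes input is a cost of carry of $0$ (equivalently, that $S_t$ is a martingale under the pricing measure, matching the assumption that the expected price in today's dollars is constant), so that no stray $e^{rT/\sigma_T}$ or $(B/S_0)^{2r/\sigma^2}$ factors survive --- and then correctly matching the four building-block terms of Haug's formula to the ``barrier-above-strike'' case with zero rebate. A minor standing assumption worth noting is $B < S_0$, i.e., $e^{\alpha T}c_0 < c$ (otherwise the loan is immediately liquidated and the barrier is breached at $t = 0$), which is what places us in the branch of the running-minimum formula used above.
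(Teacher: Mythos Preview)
Your derivation is correct and matches the paper's treatment: the paper offers no proof of this theorem, simply citing it as adapted from \cite{Haug_2007}, and your proposal carries out exactly that specialization (down-and-out call, barrier-above-strike branch, zero rebate, cost of carry $b=0$) in the detail the paper omits. One small wrinkle worth tightening: you equate ``$S_t$ is a martingale under the pricing measure'' with ``the expected price in today's dollars is constant,'' but the latter is $b=r$, not $b=0$; the displayed formula is unambiguously the $b=0$ case (the reflection exponents collapse to $(B/S_0)^{\pm 1}$ and all four terms share the common $S_0e^{-rT}$ prefactor), so your computational steps go through even though the verbal justification for choosing $b=0$ could be sharpened.
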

Now we can use this theorem to price the lending-pool parameters.

\begin{theorem}
\label{fixed-term-eq}
    The fair values for the lending-pool parameters are those which satisfy the following equation:
    $$1-1/c = e^{-rT}\left(N(\eta_1) - \frac{e^{\alpha T}}{c}N(\eta_1-\sigma_T) - \frac{e^{\alpha T}c_0}{c}N(\eta_2) + \frac{N(\eta_2 - \sigma_T)}{c_0}\right).$$
\end{theorem}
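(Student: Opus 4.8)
The plan is to combine the replication result of \autoref{european_replication} with the closed-form pricing formula of \autoref{european_value}; no substantial new work is needed. The conceptual heart is a no-arbitrage argument. By \autoref{european_replication}, a European down-and-out call with exercise price $E = e^{\alpha T}S_0/c$, barrier $B = e^{\alpha T}S_0c_0/c$, and value $v = (1-1/c)S_0$ exactly reproduces the cash flows of \emph{both} sides of a fixed-term lending-pool loan with parameters $(c, c_0, \alpha)$. In a frictionless, complete market, a position is fairly priced precisely when its price equals the cost of replicating it; hence the lending-pool parameters are fair exactly when the value $v = (1-1/c)S_0$ at which this option "trades" (implicitly, via the loan) coincides with its arbitrage-free value. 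Otherwise, either borrower or lender could take the loan and hedge it in the option market (or vice versa) for a riskless profit, so the first step I would write is to make this equivalence explicit in a sentence or two.

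Next I would invoke \autoref{european_value} to pin down that arbitrage-free value. Since $c_0 > 1$ by assumption, the barrier $B = e^{\alpha T}S_0c_0/c$ exceeds the exercise price $E = e^{\alpha T}S_0/c$, so the relevant branch of the barrier-option pricing formula is exactly the one quoted in \autoref{european_value}, giving
\[
v = S_0e^{-rT}\left(N(\eta_1) - \frac{e^{\alpha T}}{c}N(\eta_1-\sigma_T) - \frac{e^{\alpha T}c_0}{c}N(\eta_2) + \frac{N(\eta_2 - \sigma_T)}{c_0}\right),
\]
with $\eta_1$ and $\eta_2$ as defined in that theorem.

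Finally, I would equate the two expressions for $v$: setting $(1-1/c)S_0$ equal to the right-hand side above and dividing through by $S_0 > 0$ yields exactly the stated equation. There is no real obstacle in this argument; the only places that warrant a careful sentence are (i) articulating the no-arbitrage reasoning that converts "the replicating option is fairly priced" into "the lending-pool parameters are fair," and (ii) noting the inequality $B > E$ (equivalently $c_0 > 1$) so that the correct case of the Haug formula is the one applied.
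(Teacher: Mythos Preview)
Your proposal is correct and follows essentially the same route as the paper: invoke \autoref{european_replication} to identify the replicating option's value as $(1-1/c)S_0$, invoke \autoref{european_value} for its arbitrage-free price, equate, and cancel $S_0$. Your added remarks on the no-arbitrage interpretation and on checking $B>E$ (i.e., $c_0>1$) to justify the branch of the Haug formula are welcome elaborations, but the skeleton is identical to the paper's proof.
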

\begin{proof}
    From \autoref{european_replication}, we know that the value of the option that replicates the borrower and lender's utilities is $S_0(1 - 1/c)$. From \autoref{european_value}, we know that the fair value of this option is $S_0e^{-rT}\left(N(\eta_1) - \frac{e^{\alpha T}}{c}N(\eta_1-\sigma_T) - \frac{e^{\alpha T}c_0}{c}N(\eta_2) + \frac{N(\eta_2 - \sigma_T)}{c_0}\right)$. By setting these values equal and cancelling the $S_0$ term on both sides, we get the equation for fair lending-pool parameters.
\end{proof}
Note that there are three lending-pool parameters, but only one equation, so there is no unique solution for all three parameters. In this paper, we focus on the interest rate ($\alpha$), and thus assume that the liquidation and overcollateralization parameters are fixed. But in practice, all three of these parameters could be adjusted. For instance, a lending pool might offer two sets of fair loans: a lower-interest loan, with higher overcollateralization and liquidation parameter, and a high-interest loan, with lower overcollateralization and liquidation parameters. 
\section{Dynamic, perpetual lending pools}
We now move to a more realistic model of dynamic, perpetual loans. Loans
can now be repaid any time before liquidation. Also,
interest continuously accrues, so the health factor of the loan
decreases over time in expectation (if the interest rate is above the
risk-free rate). We will replicate this using perpetual American
barrier options, rather than European barrier options.  

We first prove an impossibility result in this model: the
borrowers are always (weakly) favored, and the only parameters that are fair are trivial parameters where the borrower's optimal strategy is to immediately repay the loan. We expand the impossibility result to other models, discuss
the ramifications of these results in practice, and add fixed trading fees to
this model as a solution.  

\subsection{Dynamic, perpetual lending pools have no  fair parameters}
We first provide a model of these lending pools. 
As before, let $\alpha$ now represent the yearly interest rate, and
assume that interest is continuously compounded. Now, the borrower can
pay back the loan at any point before liquidation. If they repay the
loan $t$ years from when they received it, they must pay $e^{\alpha
  t}S_0/c$ to reclaim their $1$ \ETH. The health factor now
  becomes $H_t = \frac{S_t c}{e^{\alpha t}S_0}$. Liquidation occurs when
$H_t < c_0$, which simplifies to $S_t < e^{\alpha t}S_0 c_0/c$. Note
that the liquidation barrier now increases over time, due to the
accumulating interest.  

Switching to dynamic loans means that we must now explicitly consider
the \textit{monitor frequency} of the borrowers: how often do they
monitor their loans to decide when they should repay? For now, we
continue in the spirit of Black-Scholes, and assume that borrowers
continuously monitor their loans. As a consequence, when their loan
reaches the liquidation threshold, they will always have the option to
repay their loan rather than letting it liquidate. 

\begin{theorem}
    \label{american-replication}
If borrowers continuously monitor their loans then a
    perpetual American down-and-out call option with exercise price
  $E_t = e^{\alpha t} S_0 /c$, value $v = (1 - 1/c) S_0$, and
    barrier $B_t = \frac{e^{\alpha t} S_0c_0}{c}$ can be used to
    replicate both the borrower and lender's utility for a dynamic,
    perpetual lending pool. 
\end{theorem}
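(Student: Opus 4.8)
The plan is to follow the proof of \autoref{european_replication} essentially verbatim, but with the fixed maturity $T$ replaced by an arbitrary time $t$ at which the borrower elects to repay. The first step is to record the borrower's utility in the dynamic model. The borrower deposits $1$ \ETH, receives and holds $S_0/c$ \USDC, and at any time $t$ before liquidation may pay $e^{\alpha t}S_0/c$ to reclaim the \ETH; discounting to the present, the utility along such a path is $S_0/c + e^{-rt}(S_t - e^{\alpha t}S_0/c)$, whereas if the loan is allowed to liquidate at the first time $\tau$ with $S_\tau = e^{\alpha\tau}S_0c_0/c$ the borrower keeps only $S_0/c$. Since $c_0>1$, repaying exactly at $\tau$ yields $S_0/c + e^{-r\tau}e^{\alpha\tau}S_0(c_0-1)/c > S_0/c$, so under continuous monitoring the borrower always has a strictly better alternative to being liquidated, and the borrower's strategy space is precisely the family of stopping times $t\le\tau$.

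The second step replicates the borrower's side. The option buyer sells the initial $1$ \ETH for $S_0$ \USDC, pays $v=(1-1/c)S_0$ for the perpetual American down-and-out call with (time-varying) strike $E_t = e^{\alpha t}S_0/c$ and barrier $B_t = e^{\alpha t}S_0c_0/c$, and is left holding $S_0/c$ \USDC. Because $E_t$ is exactly the repayment amount and $B_t$ is exactly the liquidation threshold, exercising at time $t$ gives present value $S_0/c + e^{-rt}(S_t - e^{\alpha t}S_0/c)$ and letting the option knock out gives $S_0/c$; and the buyer, like the borrower, can exercise at the knockout time $\tau$ (where $S_\tau=B_\tau$) to secure $S_0/c + e^{-r\tau}e^{\alpha\tau}S_0(c_0-1)/c$. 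So, path by path, the buyer's menu of outcomes coincides with the borrower's, and hence the optimal strategies and the resulting utilities agree.

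The third step handles the lender. The option seller converts the premium $v=(1-1/c)S_0$ into $1-1/c$ \ETH, buys the remaining $1/c$ \ETH on the spot market for $S_0/c$ \USDC of its own money (matching the lender's outlay of the $S_0/c$ loan), and holds $1$ \ETH. If the buyer exercises at $t$, the seller receives $E_t=e^{\alpha t}S_0/c$ and delivers the \ETH, for net present value $e^{-rt}e^{\alpha t}S_0/c - S_0/c$ --- exactly the lender's payoff on repayment at $t$. If the option knocks out at $\tau$, the seller's held \ETH is worth $S_\tau = B_\tau = e^{\alpha\tau}S_0c_0/c$, so selling it yields net $e^{-r\tau}e^{\alpha\tau}S_0c_0/c - S_0/c$, which is exactly the lender's payoff on liquidation at $\tau$ (and equals $e^{-r\tau}S_\tau - S_0/c$, so it does not matter whether the seller re-sells immediately). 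Combining the two sides gives the stated replication.

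The step I expect to be the main obstacle is the bookkeeping around the borrower's choice of repayment time: unlike the European case there is a whole family of strategies, so the replication must be argued strategy-by-strategy (equivalently, pathwise) rather than for one fixed exercise time, and one must use the fact that $E_t$, $B_t$, and the corresponding lending-pool quantities are literally the same functions of $t$, so that the knockout/liquidation time $\tau$ and every cash-flow magnitude coincide across the two settings. A secondary subtlety is making precise that ``continuous monitoring'' lets the borrower (and the option holder) act exactly at the barrier, so that allowing liquidation is always weakly dominated and the two strategy spaces are genuinely identical.
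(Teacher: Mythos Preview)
Your proposal is correct and follows essentially the same approach as the paper's own proof: both first argue (using $c_0>1$ and continuous monitoring) that repayment at the barrier dominates liquidation, then replicate the borrower via the sell-\ETH{}-then-buy-option construction of \autoref{european_replication}, and replicate the lender via the convert-premium-to-\ETH{} construction. The only minor difference is that the paper, having established that liquidation never occurs, writes the lender's utility in the single compact form $(e^{(\alpha-r)t}-1)S_0/c$ and omits the separate knockout case you spell out, but your more case-by-case treatment is equally valid.
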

\begin{proof}
    First, note that borrowers will always choose to repay their loans at the liquidation threshold. If they repay at the liquidation threshold of $S_t = e^{\alpha t}S_0 c_0/c$, their utility is $S_0/c + e^{(\alpha - r) t}S_0 c_0/c - e^{(\alpha - r) t}S_0/c$; if they let the loan liquidate their utility is just $S_0/c$. Since $c_0 > 1$, they will always prefer to repay at the liquidation threshold. Thus, their overall utility is:
    $$u = \begin{cases}
        \frac{S_0}{c} (1 + e^{(\alpha-r) t}(c_0 - 1)) & \text{if  $S_t
                    = \frac{e^{\alpha t}S_0c_0}{c}$},\\ 
        e^{-rt}(S_t - e^{\alpha t} S_0 /c) + S_0/c & \text{if the loan
          repaid at time $t$.} 
    \end{cases}$$
    
    To replicate this, we will use a perpetual American down-and-out call option. This is an option with no expiration date, that can be exercised anytime before the barrier is breached to buy the underlying asset. Further, both the exercise price and barrier for the option will increase with time.
Suppose that this option has the following parameters:
    \begin{itemize}
\item exercise price $E = e^{\alpha t} S_0 /c$,
\item value of option $v = (1 - 1/c) S_0$,
\item barrier $B = \frac{e^{\alpha t} S_0c_0}{c}$.
      
    \end{itemize}
    The replication is identical to
    \autoref{european_replication}. The option buyer will sell his
    initial $1$ \ETH\ for $S_0 $ \USDC, and then buy the
option  above. They will then simply exercise the option exactly when
    they would repay the loan. If the barrier is hit (but not
    breached), the buyer will always choose to exercise the option
    rather than let it be canceled. In this case, their utility would be
    $S_0/c + e^{-rt}(\frac{e^{\alpha t}S_0 c_0}{c} - e^{\alpha
      t}S_0/c) = \frac{S_0}{c} (1 + e^{(\alpha-r) t}(c_0 - 1))$. If
   the option is exercised at time $t$, the buyer would have a utility
   of $S_0/c +  e^{-rt}(S_t - e^{\alpha t}S_0/c)$. This matches the
    utility of the lending-pool borrower. 
    
    The lender's utility is even simpler to calculate than the borrower's
    utility. The borrower will always repay the loan (either at the
    liquidation threshold or earlier), and thus the lender will simply
    receive $e^{\alpha t} S_0/c$ when that happens. Including the
    amount loaned to the borrower, the lender's total utility is thus $u =
    (e^{(\alpha-r) t} - 1)S_0/c$, where $t$ is the time the borrower
    repays the loan.  

The procedure for replicating the lender's utility is also identical
to that given in the proof of 
    \autoref{european_replication}, so we do not step through it in
    detail. 
\end{proof}

We've reduced the challenge of finding fair lending-pool
parameters to that of determining a fair value for these American barrier
options. Unfortunately, we now show that there are no fair parameters
for the American barrier option used to replicate the lending pool. 

\begin{lemma}
      A perpetual American down-and-out call option over an underlying
    asset of value $S_0$, with (possibly varying) exercise price $E_t$
    and barrier $B_t$ has value at least $S_0 - E_0$. The value is
    equal to $S_0 - E_0$ only in the trivial case where the optimal
    strategy for the buyer is to immediately exercise the option. 
    \label{lm:american_bound}
\end{lemma}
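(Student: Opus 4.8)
The plan is to lean on the standard no-arbitrage characterization of American-style claims that already underlies \autoref{american-replication}: in a complete, frictionless market the fair value $v$ of the option equals the supremum, over all admissible exercise strategies (stopping times), of the expected discounted payoff the holder obtains. For a down-and-out option an admissible strategy is a stopping time $\tau$ occurring no later than the first breach of the barrier, and in particular the degenerate strategy ``exercise at the instant of purchase'' ($\tau = 0$) is admissible, since the option is live at $t = 0$: in the replication $B_0 = S_0 c_0 / c < S_0$ and $E_0 = S_0/c < S_0$, so the time-$0$ payoff $S_0 - E_0$ is well-defined and positive (more generally one should read the statement under the mild standing assumptions $B_0 < S_0$ and $E_0 < S_0$).

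With that in hand the argument is short. Exercising immediately converts the option into the underlying, worth $S_0$, at cost $E_0$, for a certain payoff of $S_0 - E_0$; since this is one admissible strategy, $v \ge S_0 - E_0$. (Equivalently, without invoking the full supremum machinery: if $v < S_0 - E_0$ one buys the option for $v$, exercises it at once, and sells the underlying for $S_0$, locking in a riskless profit of $S_0 - E_0 - v > 0$, contradicting the no-arbitrage assumption of the pricing model.) For the equality clause, note that $\tau = 0$ already attains the value $S_0 - E_0$, so $v = S_0 - E_0$ holds precisely when no strategy strictly beats immediate exercise -- that is, precisely when immediate exercise is optimal -- whereas if some $\tau \ne 0$ yields a strictly larger expected discounted payoff, which is exactly the non-trivial case, then the supremum, and hence $v$, exceeds $S_0 - E_0$.

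I expect the only real subtlety to be pinning down what ``value'' means for a perpetual American barrier option and justifying the supremum-over-stopping-times characterization: this is inherited from the completeness and frictionless-market assumptions carried over from Black-Scholes, together with handling the knock-out correctly -- a holder who monitors continuously can exercise exactly at the barrier, so the barrier merely restricts the set of admissible stopping times rather than forcing a zero payoff. Everything else is immediate, so I would keep the write-up short and devote the space to making the admissibility of $\tau = 0$ and these standing assumptions explicit.
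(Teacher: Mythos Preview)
Your proposal is correct and follows essentially the same argument as the paper: both observe that immediate exercise is an admissible strategy yielding $S_0 - E_0$, whence $v \ge S_0 - E_0$, with equality precisely when no other strategy strictly beats it. Your write-up is simply more careful about framing this via the supremum-over-stopping-times characterization and about making the admissibility of $\tau = 0$ explicit, whereas the paper dispatches the whole thing in two sentences.
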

\begin{proof}
    Since this option is American, the buyer can always choose to exercise it immediately. If they do, they would receive utility $S_0 - E_0$. The actual utility of the option must be at least this value. 
    %
    If the buyer had any positive expected utility for holding the option after $t= 0$, then the value would be strictly greater than $S_0 - E_0$. This proves the second statement in the lemma.
\end{proof}
\begin{theorem}
    In a dynamic, perpetual lending pool with no fixed fees, there are no fair values for $\alpha$, $c$, and $c_0$ such that the borrower wants to hold the loan for any time.
    \label{thm:perpetual_negative}
\end{theorem}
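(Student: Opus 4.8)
The plan is to read the theorem as a direct corollary of \autoref{american-replication} together with \autoref{lm:american_bound}. By \autoref{american-replication}, a dynamic, perpetual loan with parameters $(\alpha,c,c_0)$ has, for both the borrower and the lender, exactly the cash flows of a perpetual American down-and-out call with posted price $v=(1-1/c)S_0$, exercise price $E_t=e^{\alpha t}S_0/c$, and barrier $B_t=e^{\alpha t}S_0c_0/c$; in particular $E_0=S_0/c$. I would take ``fair'' to mean that neither side expects to profit at the other's expense, which---because the replication matches utilities state by state and the Black--Scholes market is frictionless and complete, so the option admits a well-defined no-arbitrage value $V$ attained by a self-financing hedge---is equivalent to the posted price coinciding with that value, $v=V$.

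First I would note the one-sided inequality: \autoref{lm:american_bound} gives $V\ge S_0-E_0=S_0-S_0/c=(1-1/c)S_0=v$, so the borrower is always weakly favored and $v=V$ is the only way the loan can be fair. Then I would argue by contradiction: suppose some $(\alpha,c,c_0)$ is fair and the borrower strictly prefers holding the loan for some positive time over repaying at $t=0$. Through the replication this says the option buyer has a (waiting) strategy strictly beating the immediate-exercise payoff $S_0-E_0$; by the reasoning in the proof of \autoref{lm:american_bound}, this forces $V>S_0-E_0=v$, contradicting fairness. Hence every fair choice of parameters is trivial, with the borrower's optimal action being to repay immediately.

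The part that needs care is justifying the equivalence ``fair $\iff v=V$'': I would spell out that, since \autoref{american-replication} matches the borrower's \emph{and} the lender's utilities pathwise, any price other than the no-arbitrage value $V$ would hand a strictly positive expected surplus to exactly one of the two parties (the one on the cheap side of the trade), so it cannot be fair; and I would double-check that the same constraint $v=V$ arises when one views the replicating option from the seller's (lender's) side, so there is no residual asymmetry. Everything else is bookkeeping---chiefly the substitution $t=0$ in $E_t$ that makes the lemma's bound $S_0-E_0$ numerically equal to the replication price $(1-1/c)S_0$, which is exactly what pins fairness down to the degenerate ``repay immediately'' regime and matches the intuition, stated just before the theorem, that borrowers are always (weakly) favored.
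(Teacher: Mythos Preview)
Your proposal is correct and follows essentially the same route as the paper: invoke the replication of \autoref{american-replication} to identify the loan with an option whose posted price must be $(1-1/c)S_0$, then apply \autoref{lm:american_bound} with $E_0=S_0/c$ to see that the fair value is at least $(1-1/c)S_0$, with equality only in the immediate-exercise case. The only difference is that you explicitly justify the equivalence ``fair $\iff v=V$'' via pathwise replication and completeness, whereas the paper simply takes this for granted from the option-pricing framework.
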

\begin{proof}
    By applying \autoref{lm:american_bound} to the option in the
    replication in \autoref{american-replication}, we get that the
    value of the option is at least $S_0 - E_0 = S_0 - e^{\alpha \cdot
      0}S_0/c = S_0(1-1/c)$. If $\alpha$, $c$, and $c_0$ are set such
    that the borrower does not want to immediately exercise the
    option, then the value of the option must be greater than $S_0 -
    E_0$. In this case, there are no fair parameters, since the
    value exceeds $S_0(1 - 1/c)$, which is the value required for
    replication.  
\end{proof}

\subsection{Expanding the impossibility result}
We now expand the impossibility result to different models of lending pools. The lemmas in this section are technically simple but conceptually interesting because they can apply to models of lending pools that do not admit an option replication. For instance, suppose we want to more accurately model the loss that lenders experience due to liquidation. To do this, we would need to model a third party, the liquidators, which would render option replication more difficult (as there are only two parties involved). But we can apply \autoref{lm:lenders-worse} to such a model, and prove that it would suffer from the same impossibility result, without an explicit option replication.

In the context of finding fair parameters, the main problem with our original
model is that it overly favors the borrowers. We formalize this intuition with the following two lemmas. To do so, let (IMP) refer to following:
\begin{table}[H]
    \centering
    \begin{tabular}{p{1cm} p{10cm}}
         (IMP) &  For any set of lending-pool parameters $P = \{\alpha, c, c_0\}$, 
         either the borrower's optimal strategy is to immediately repay the loan, or the borrower is favored over the lender. 
    \end{tabular}
\end{table}

\begin{lemma}\label{lm:lenders-worse}
    Let $M$ denote a model of dynamic, perpetual lending pools which satisfies (IMP). Let $M'$ denote a model of dynamic, perpetual lending pools where lenders always receive at most as much utility as in $M$, but the utility of borrowers is unchanged. Then $M'$ also satisfies (IMP).
\end{lemma}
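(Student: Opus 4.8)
The plan is to fix an arbitrary parameter set $P = \{\alpha, c, c_0\}$ and verify the disjunction (IMP) for $M'$ directly, using only the two hypotheses we are handed: the borrower's payoff is the same in $M$ and $M'$, and the lender's payoff in $M'$ never exceeds its payoff in $M$. The first move is to observe that, since the borrower's utility as a function of its strategy is identical across the two models, the borrower solves literally the same optimization problem in $M'$ as in $M$; hence the set of optimal borrower strategies --- and the optimal borrower utility $u_B$ --- are the same. In particular, ``the borrower's optimal strategy is to repay immediately'' holds in $M'$ exactly when it holds in $M$.

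Next I would invoke (IMP) for $M$ and split into its two cases. If the borrower's optimal strategy in $M$ is to repay immediately, then by the observation above the same is true in $M'$, so the first disjunct of (IMP) holds for $P$ and we are done. Otherwise, the borrower is favored over the lender in $M$. Fix an optimal borrower strategy $s^{\star}$ (the same in both models) and let $u_L^{M}$ and $u_L^{M'}$ denote the lender's utility when the borrower plays $s^{\star}$ in $M$ and in $M'$, respectively. ``Borrower favored'' in $M$ says $u_B \ge u_L^{M}$; the hypothesis on lenders gives $u_L^{M} \ge u_L^{M'}$; chaining the two inequalities yields $u_B \ge u_L^{M'}$, i.e., the borrower is favored over the lender in $M'$ as well (and if ``favored'' is meant strictly, the chain $u_B > u_L^{M} \ge u_L^{M'}$ delivers the strict conclusion). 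Thus the second disjunct of (IMP) holds for $P$, and since $P$ was arbitrary, $M'$ satisfies (IMP).

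There is no deep obstacle here; as the surrounding discussion notes, the lemma is deliberately ``technically simple.'' The one place where care is needed is the interface between the models: I would make explicit that ``the lender receives at most as much utility as in $M$'' is to be read for every fixed borrower strategy (so that it applies, in particular, to the strategy the borrower plays optimally, however the resulting payoff is realized or aggregated), and that ``the borrower's utility is unchanged'' means the entire payoff function --- not merely its maximum --- is preserved, which is what makes the optimal-strategy set transfer verbatim. It is worth stressing that this proof never mentions options or a replication argument, only the two parties' utilities and transitivity of $\ge$; that is precisely why the lemma applies to refinements of the model --- e.g.\ ones with a separate class of liquidators --- for which no clean two-party option replication is available.
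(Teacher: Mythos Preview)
Your proof is correct and follows essentially the same approach as the paper's own proof: fix $P$, use the fact that the borrower's optimization problem is unchanged to transfer the ``immediately repay'' case verbatim, and in the remaining case chain the ``borrower favored'' inequality from $M$ with the lender-utility inequality to obtain ``borrower favored'' in $M'$. Your write-up is in fact somewhat more explicit than the paper's (you spell out the transitivity step and the needed reading of the hypotheses), but the structure and ideas are the same.
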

\begin{proof}
    Consider a set of lending-pool parameters: $P = \{\alpha, c, c_0\}$. We split into two cases, based on the optimal response of the borrowers to $P$ in $M$.

    For the first case, suppose that the optimal response is to immediately repay the loan. Then, in $M'$, the optimal response for the borrowers will also be to immediately repay the loan, since their utility is the same in $M'$ and $M$.

    For the second case, suppose that the optimal response is for the borrowers to hold the loan for some time. By (IMP), this means that the borrowers are favored over the lenders. But since the utility of the lender in $M'$ is at most as high as in $M$, the borrower is still favored in the new model, and so the parameters are unfair.

    Thus, in $M'$, either the borrower immediately repays the loan, or the borrower is favored. So, $M'$ satisfies (IMP) as well. 
\end{proof}
As we mentioned earlier, this lemma is powerful because it applies even to models which do not admit an obvious option replication. Consider a model where lenders lose some amount of money to the liquidator via an auction (or by offering a fixed discount to the liquidators). In our model, we've assumed that liquidations are costless to the lender. Thus, in the more realistic treatment, lenders receive at most as much utility as in our model. The lemma above shows that there are no fair parameters in such a model. 

For another example, note that in practice lenders do not get the full interest that borrowers pay; instead, the lending platform takes a small fee. By \autoref{lm:lenders-worse}, the impossibility results applies to this setting as well. 

We can prove a similar result for borrowers. 
\begin{lemma}
    Let $M$ denote a model of dynamic, perpetual lending pools which satisfies (IMP). Let $M'$ denote a model of dynamic, perpetual lending pools where borrowers always receive at least as much utility as in $M$, but the utility of lenders is unchanged. Then, $M'$ also satisfies (IMP).
\end{lemma}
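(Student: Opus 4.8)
The plan is to mirror the proof of \autoref{lm:lenders-worse}, exploiting the symmetry between ``lowering the lender's utility'' and ``raising the borrower's utility'': both operations ought to preserve the property ``the borrower is favored over the lender,'' exactly as the cited proof uses for the lender. Concretely, I would fix an arbitrary parameter set $P = \{\alpha, c, c_0\}$ and case-split on the borrower's optimal response to $P$ in the original model $M$.

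In the first case, the borrower's optimal response in $M$ is to hold the loan for some positive time. Then (IMP) for $M$ says the borrower is favored over the lender in $M$. Passing to $M'$, the borrower's optimal utility can only weakly increase (every strategy is worth weakly more in $M'$, so the supremum over strategies is weakly larger), while the lender's utility is unchanged; hence the borrower is still favored in $M'$. I would also observe the degenerate possibility that the boost to the borrower's utility makes ``immediately repay'' the new optimal response in $M'$, in which case the first disjunct of (IMP) holds outright. Either way $P$ satisfies (IMP) in $M'$.

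In the second case, the borrower's optimal response in $M$ is to immediately repay. Unlike in \autoref{lm:lenders-worse}, I cannot simply conclude that immediate repayment remains optimal in $M'$, since the borrower's utility has changed. So I would split once more: if immediate repayment is still optimal in $M'$, the first disjunct of (IMP) holds; otherwise the borrower strictly prefers to hold. Here I would use that immediate repayment is a trivial, instantaneous transaction whose payoff to the borrower (and to the lender) is the same in $M$ and $M'$ and equals the ``break-even'' fair payoff. Since the borrower now does strictly better than break-even while the lender's utility under any fixed borrower behavior is unchanged from $M$, the borrower has captured strictly positive surplus with no offsetting gain for the lender, so the borrower is favored in $M'$; (IMP) holds for $P$. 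As $P$ was arbitrary, $M'$ satisfies (IMP).

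The main obstacle is exactly this second case: in \autoref{lm:lenders-worse} the borrower's behavior was pinned down because its utility was untouched, whereas here the borrower may switch from ``immediately repay'' to ``hold,'' and (IMP) for $M$ gives no leverage in that situation, being satisfied vacuously by its first disjunct. Making this rigorous needs a clean formalization of ``fair''/``favored'' --- e.g.\ that the trivial immediate-repay loan is fair by inspection in \emph{any} such model, together with a ``conservation of surplus'' observation that the borrower's expected utility above the fair baseline is matched by the lender's expected utility below it --- so that strictly beating break-even while the lender's utility is held fixed genuinely means the borrower is favored. Once that is available, the three cases exhaust all $P$ and the lemma follows.
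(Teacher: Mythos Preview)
Your proposal is correct and follows essentially the same argument as the paper, with the two top-level cases merely swapped in order; in particular, the paper also sub-splits on whether immediate repayment remains optimal in $M'$ and concludes ``the borrower is favored'' from strictly beating the immediate-repay payoff. Your explicit flagging of the need for a clean notion of ``favored'' and a conservation-of-surplus observation is, if anything, more careful than the paper's own proof, which simply asserts this step without further justification.
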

\begin{proof}
    Consider a set of lending-pool parameters: $P = \{\alpha, c, c_0\}$. We split into two cases, based on the optimal response of the borrowers to $P$ in $M$.

    For the first case, suppose that the optimal response is to immediately repay the loan. If the optimal strategy of the borrower in $M'$ is to immediately repay, we're done. Otherwise suppose that in $M'$, the borrower would rather hold the loan for longer. This means that the borrower gets more utility, since they would get the same utility for immediately repaying in $M$ as in $M'$, by assumption. Thus, the borrower is favored.

    For the second case, suppose that the optimal response is for the borrowers to hold the loan for some time. By (IMP), this means that the borrowers are favored over the lenders. Since the borrower receives at least as much utility in $M'$, the borrower is still favored over the lender. 

    Thus, in $M'$, either the borrower immediately repays the loan, or the borrower is favored. So, $M'$ satisfies (IMP) as well. 
\end{proof}
For an application of this, consider adding the option for borrowers to ``top-up'' their loan, by adding additional collateral part-way through the loan term to lower the risk of liquidation. Such a model would have borrowers receive at least as much utility, as they can always choose to not top-up their loan. This lemma shows that such a model suffers from the same impossibility result.

These lemmas also compose nicely, and can be applied to more complicated models. For example, as we mentioned earlier, interest rates are not static for lending pools in practice; they vary over time due to utilization. Is there some fair ``dynamic'' interest function? Unfortunately not. Let $\alpha_1$ denote the highest interest in an arbitrary execution of the interest function, and let $\alpha_0$ denote the lowest interest. Consider the model where borrowers pay $\alpha_1$ and lenders receive $\alpha_0$. By the same logic as in the ``platform-fee'' model discussed earlier, this model satisfies (IMP). Now, note that the dynamic interest model is (possibly weakly) better for borrowers and worse for lenders. Thus, by composing \autoref{lm:lenders-worse} and \autoref{lm:borrowers-better}, we get that there is no dynamic interest function that is fair for both parties. 

Unfortunately, not \textit{every} way of penalizing borrowers gets around the impossibility result. In particular, consider the following plausible model changes that are likely true in practice:
\begin{enumerate}
    \item Borrowers do not monitor their loans continuously, but instead discretely. This \textit{does} penalize borrowers, who would be exposed to liquidation risk when discretely monitoring their loan. However, this does not fix the problem with our model. It's easy to create an option replication for this model that is nearly identical to the replication in \autoref{american-replication}. Further, even in the new model, borrowers extract the same amount of utility (exactly $S_0(1-1/c)$) from \textit{immediately} repaying the option. Thus, the same impossibility result goes through. Later, when we consider a model that does circumvent the impossibility result, we do add discrete monitoring for borrowers. 
    \item Borrowers might discount future utility beyond the risk-free rate. Later, we add a discount parameter for borrowers, but that cannot fix this problem for the same reason as discrete monitoring.
\end{enumerate}

\subsection{Fair parameters vs. utility-maximizing parameters}
Despite our impossibility results, lending pools are very popular in practice. In this section, we contextualize our impossibility results, and consider reasons why lenders may not prioritize fair parameters.

The intuition behind \autoref{thm:perpetual_negative} is that the
borrower can extract too much value from the lending pool for a fair
replication. Suppose that a lending pool is trying to determine the best
value of $\alpha$ by raising $\alpha$ from $0$ until the borrower no
longer wishes to participate (and suppose that $c > c_0 > 1$ is
fixed). The borrower may wish to participate and hold the loan for
small values of $\alpha$. As the interest rate rises, they would want
to hold the loan for less time. At some point, say $\alpha^*$, the
borrower's best strategy is to simply immediately repay the loan,
which nets $0$ utility. Since the borrower is not holding the loan for
any time, he pays no interest. Thus, the interest rate can be
arbitrarily high, but the pool is fair since neither party is
benefiting.  

To understand this better, it's worth analyzing our replication(s)
further. Our replication schemes replicate the utilities of both the
borrower and lender in the lending pool, by considering the buyer and seller
of an option. In order for the replication to work, the option seller
must convert the price of the option into \ETH, and hold that. Thus,
the option here is a \textit{covered} option, which is an option where
the option seller owns the underlying stock, and thus the buyer does
not have to worry about the seller being unable to honor the
option. However, covered options and lending pools are often framed
differently, despite having the same underlying utilities. Option
sellers might view giving their stock to the buyer as a loss if $S_t >
E_t$, as they could sell the stock for more money on the market if
they weren't restricted by the option. On the other hand, since the
collateral is owned by the buyers, lenders typically view lending
pools as more akin to savings accounts, since their utility does not
depend on $S_t$ at all. That is, they may compare their utility from
the lending pool to a baseline of simply holding their \USDC\ and
obtaining the risk-free rate, rather than buying \ETH\ with their
\USDC\ on the spot market. 

\begin{figure}
  \centering
  \subfloat{\includegraphics[width=0.48\textwidth]{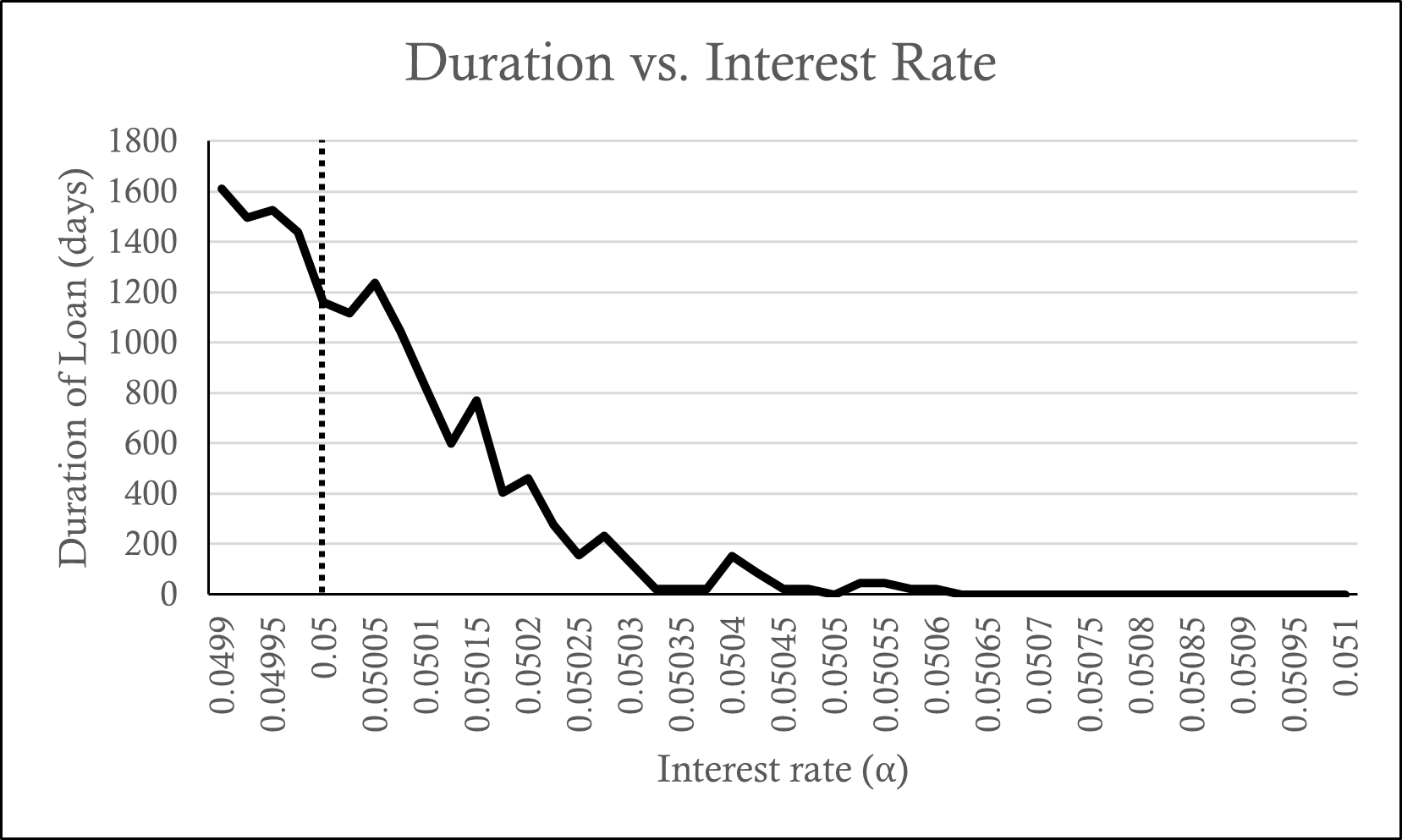}}
  \hfill
  \subfloat{\includegraphics[width=0.48\textwidth]{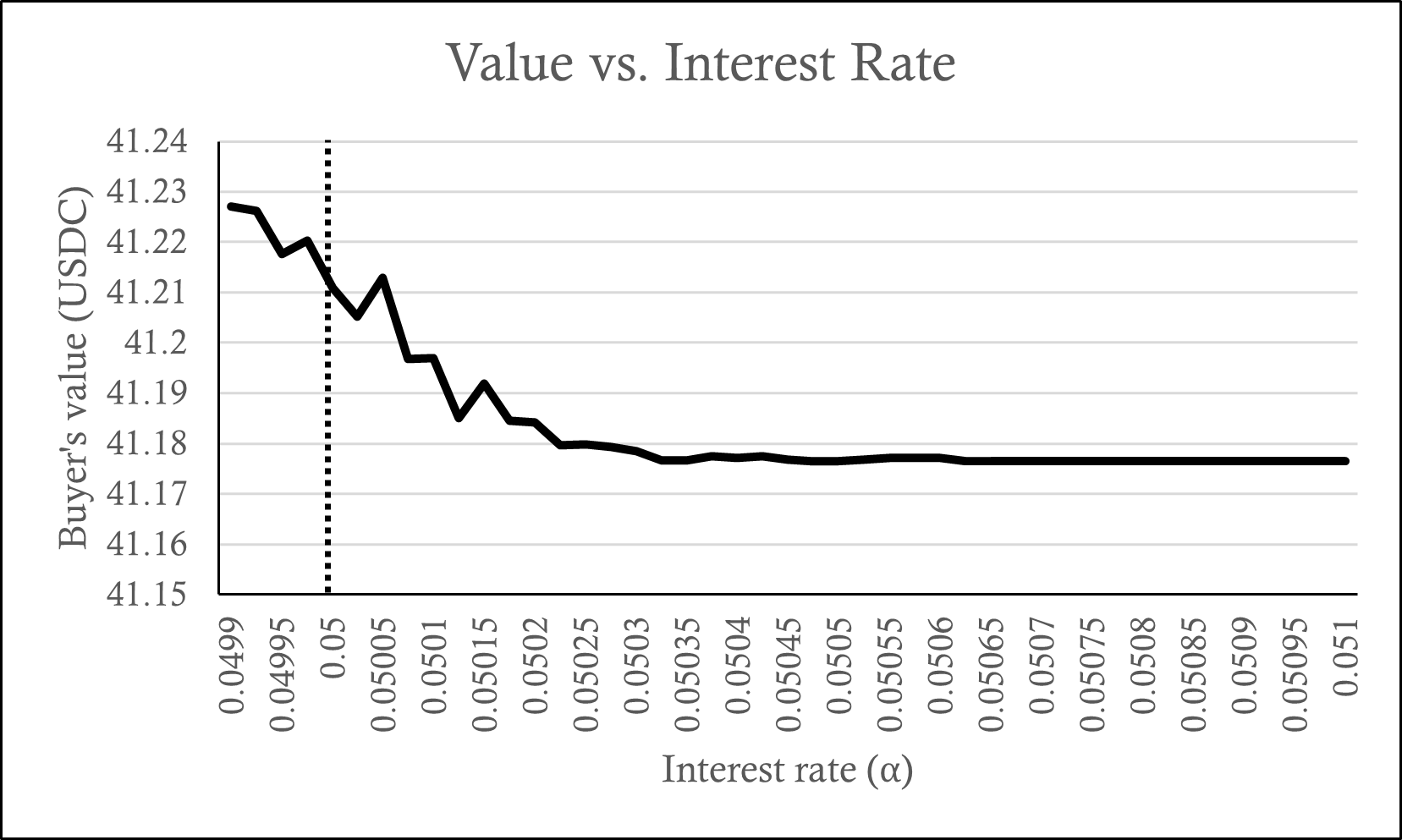}}
  \caption{These graphs show how the duration and buyer's expected value decrease as the interest rate increases. They were generated from simulations; see Section 6 for more details on the simulations. Here, we simulated with $S_0 = 100, c = 1.7, c_0 = 1.2,$ and $r = 0.05$. The dotted line at $\alpha = 0.05$ indicates the smallest $\alpha$ which gives the lender an expected utility higher than the risk-free rate. Note that the value quickly converges to $S_0(1-1/c) \approx 41.17$, as the buyer's optimal strategy converges to immediately exercising the loan. The only fair interest rates are those which yield a buyer value of $S_0(1-1/c)$.}
  \label{fig:duration}
\end{figure}

Recall from the proof of \autoref{american-replication} that the
lender's utility in a dynamic, perpetual lending pool is simply
$(e^{(\alpha - r)t} -1)S_0/c$, where $t$ is the time when the borrower
repays the loan. Clearly, if $\alpha < r$, the lender incurs negative
utility, and they would have preferred to not participate in the
lending pool. What if $\alpha > r$? The lender does receive positive
utility, and thus prefers participation in the lending pool to
non-participation (i.e., just getting the risk-free rate). And in
fact, \autoref{fig:duration} shows a concrete example of this, where there \textit{are} values of $\alpha >
r$ where both the lenders and borrower will have positive expected
utility. This can be seen in the graph above: since the value of the
option is greater than $S_0(1-1/c)$, we know that by the replication
that the borrower has positive expected utility. And we just argued
that when $\alpha > r$, the lender receives positive utility.  

Yet by \autoref{thm:perpetual_negative}, these parameters are
unfair. If we think in terms of the option seller, this means that
there must be some financial portfolio (i.e., some set of trades)
available to the option seller that has greater utility than selling
the option; this is simply what it means for the fair value to be
greater than the amount paid to the seller. Similar remarks apply to
the lender, so the lender receives positive utility as compared to
simply holding their \USDC, but receives negative utility as compared
to the optimal financial portfolio. 

This may seem like a trivial framing issue. But it helps explain why
dynamic, perpetual lending pools exist in practice. If lenders want to
treat lending pools like a savings account, they may be unwilling to
engage in financial portfolios that involve selling their \USDC, and
holding \ETH\ instead. This could be a form of risk aversion, or they
could be sensitive to transaction fees. So, a distinct but well
motivated question one might ask is what are the
\textit{revenue-maximizing} parameters for the lender in the dynamic,
perpetual lending pool? Answering this question would require more
modeling -- if there was a limited supply of borrowers, then the
lending pool would want to balance high interest rate with high loan
duration; if there were was an infinite of borrowers, then the lending
pool would just want to maximize interest rates subject to positive
loan duration. 
In this paper, we focus on the \textit{fair} parameters; parameters
for the lending pools that guarantee that there is no portfolio that
always outperforms the lending pool for either the borrower or the
seller. But we wanted to highlight the question of finding the
revenue-maximizing parameters, both to make the impact of our theorem
more clear, and as an interesting question for future study. 

\subsection{Circumventing the impossibility result}
As we saw earlier, we need to focus on changes to the model that would penalize borrowers or favor lenders. What factors do this in practice? 
\begin{enumerate}
    \item Borrowers may struggle to determine the optimal exercise threshold. In particular, there are no closed-form solutions for the optimal exercise threshold, even given the Black-Scholes model \citep{gao2000valuation}. 
    \item Borrowers might have inaccurate beliefs about the price movement, which lead to suboptimal behavior.
\end{enumerate}


These factors are reasonable behavioral explanations, but
would require more complex modeling. We want to penalize borrowers and/or favor lenders, but do so in a way that is compatible with option replication. Thus, we focus on
another factor: transaction costs. In practice, borrowers are
interacting with the lending pool much more often than lenders, who
often just deposit money and let it accrue interest over time. Thus,
the burden of transaction costs falls on borrowers. Modeling
transaction costs directly in our model is tricky, since it involves
transactions to a third party (a miner or validator) and thus
replication for borrowers and lenders would no longer be
symmetrical. We instead add a fixed fee associated with
repayment. 

\subsection{Replication for fixed fees}
The only change to the previous model is the addition of a fixed
fee $\beta$ that borrowers pay to lenders to reclaim their
collateral. Thus, if they repay the loan $t$ years from when they
received it, they must pay $e^{\alpha t}S_0/c + \beta$ to reclaim
their $1$ \ETH. The health factor is now $H_t =
\frac{S_t}{e^{\alpha t}S_0/c + \beta}$, so liquidation occurs when
$S_t < (e^{\alpha t}S_0/c + \beta)c_0$. 

\begin{theorem}
If borrowers continuously monitor their loans, then a
  perpetual American down-and-out call option with exercise price
     $E_t = e^{\alpha t} S_0 /c + \beta$, value $v = (1 - 1/c) S_0$,
     and barrier $B_t = (e^{\alpha t}S_0/c + \beta)c_0$ can be used to
     replicate both the borrower and lender's utility for a dynamic,
     perpetual lending pool with a fixed fee. 
     \label{fixed_fee_repl}
\end{theorem}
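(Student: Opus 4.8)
The plan is to repeat the argument of \autoref{american-replication} essentially verbatim, carrying the fixed fee $\beta$ through every cash flow. First I would write down the borrower's utility as a function of the time $t$ at which the loan is repaid: having received $S_0/c$ \USDC{} at time $0$ and deposited $1$ \ETH, a borrower who repays at time $t$ pays $e^{\alpha t}S_0/c + \beta$ to reclaim $1$ \ETH{} worth $S_t$, for a present-valued utility of $S_0/c + e^{-rt}(S_t - e^{\alpha t}S_0/c - \beta)$. As in \autoref{american-replication}, I would then check that the borrower never voluntarily lets the loan liquidate: at the liquidation boundary $S_t = (e^{\alpha t}S_0/c + \beta)c_0$ the quantity $S_t - e^{\alpha t}S_0/c - \beta$ equals $(e^{\alpha t}S_0/c + \beta)(c_0-1) > 0$ since $c_0 > 1$, so repaying at the threshold strictly beats the liquidation payoff $S_0/c$. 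Hence the borrower's utility is given by the single displayed expression, with the ``hit the barrier'' case being the special instance $S_t = B_t$.

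Next I would give the buyer-side replication, exactly as before: the option buyer sells the starting $1$ \ETH{} for $S_0$ \USDC, pays $v = (1-1/c)S_0$ for the option, keeps $S_0/c$ \USDC, and exercises precisely when the borrower would repay. Exercising at $t$ costs $E_t = e^{\alpha t}S_0/c + \beta$ and delivers $1$ \ETH{} worth $S_t$, so the buyer's present-valued utility is $S_0/c + e^{-rt}(S_t - e^{\alpha t}S_0/c - \beta)$, matching the borrower term by term. Since $B_t = c_0 E_t > E_t$, the option is in the money whenever the barrier is touched, so the buyer always exercises rather than allowing cancellation --- the analogue of repaying at the liquidation threshold. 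Because the two utility functions agree for every exercise/repayment time, the achievable-utility sets coincide, and hence so do the optimal strategies and the optimal utilities.

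Finally I would handle the lender. Since the borrower always repays (at the latest at the liquidation threshold), the lender lends $S_0/c$ \USDC{} at time $0$ and collects $e^{\alpha t}S_0/c + \beta$ \USDC{} at the repayment time $t$, for utility $e^{-rt}(e^{\alpha t}S_0/c + \beta) - S_0/c$. On the option side, the seller takes in the premium $v = (1-1/c)S_0$, immediately converts it to $(1-1/c)$ \ETH, buys the remaining $1/c$ \ETH{} on the spot market for $S_0/c$ \USDC, and thereby holds $1$ \ETH{} at a net time-$0$ cost of $S_0/c$ \USDC; at exercise the seller delivers the \ETH{} and receives $E_t = e^{\alpha t}S_0/c + \beta$, for present-valued utility $e^{-rt}(e^{\alpha t}S_0/c + \beta) - S_0/c$, matching the lender. (Note this is independent of $S_t$, consistent with the ``savings account'' view of the lender discussed above.)

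There is no real obstacle here: this is just the $\beta$-decorated copy of \autoref{american-replication}. The only things to be careful about in the write-up are the bookkeeping --- $\beta$ must appear as an \emph{additive} term inside the exercise price so that the buyer's exercise payment equals the borrower's repayment, the fee must flow from borrower to lender on both sides of the replication, and the barrier must remain strictly above the exercise price so the ``always exercise at the barrier'' step survives --- and flagging explicitly the single place where $c_0 > 1$ is used, namely ruling out voluntary liquidation.
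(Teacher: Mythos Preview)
Your proposal is correct and follows essentially the same approach as the paper: you carry $\beta$ through the cash flows of \autoref{american-replication}, use $B_t = c_0 E_t$ with $c_0 > 1$ to show the borrower always repays at the liquidation threshold, and arrive at the lender's utility $e^{-rt}(e^{\alpha t}S_0/c + \beta) - S_0/c$. The paper's proof is terser --- it simply records the $B_t > E_t$ check and then defers to \autoref{american-replication} --- but the content is the same.
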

\begin{proof}
    As before, the borrower will always choose to repay their loans at the liquidation threshold. To see this, note that repayment yields a utility of $S_0/c + e^{-rt}(B_t - E_t)$, where $t$ is when the loan hits the liquidation threshold. Liquidation yields a utility of $S_0/c$, so repayment is preferred when $e^{-rt}(B_t - E_t) > 0$, or when $B_t > E_t$. As before, $B_t = c_0 E_t$, and since $c_0 > 1$, repayment is always preferred. 
    
    The remainder of the proof proceeds exactly as in \autoref{american-replication}. The one difference is that the lender's utility is now $e^{-rt}(e^{\alpha t}S_0/c + \beta) - S_0/c$, assuming the borrower repays at time $t$. Similar to before, if $\alpha > r$, then this value is strictly positive, and lenders prefer participation to collecting the risk-free rate. Unlike before, if $\alpha \le r$, lenders may still want to participate, as their utility will be positive if $\beta > S_0/c(1 - e^{(\alpha - r)t})$.
\end{proof}
Let $v'$ be the fair price of the option with the parameters
given in the theorem. Due to the replication above, we can find the
fair interest rate $\alpha$ given some $c$ and $c_0$ by setting $v' =
S_0(1 - 1/c)$.
Applying \autoref{lm:american_bound}, we get that $v' \ge S_0 -
E_0 = S_0(1 - 1/c) - \beta$. So, as $\beta$ increases, the parameters
must be such that the borrower can extract more utility from holding
the option.  

Unfortunately, since this option is a perpetual, American option with dynamic barrier and exercise price, there is no known analytic formula for $v'$. Instead, we will use simulations to estimate $v'$. Before we do that, we make our model more realistic by adding top-ups. 




\section{Top-ups: discounting and discretization}
In real lending pools, agents have the option to top-up their loan by adding more collateral. This increases the health factor of their loan, helping them avoid liquidation. When they repay their loan, they can reclaim their original collateral as well as any new collateral. 

We model top-ups by giving borrowers the power to add an arbitrary amount of collateral to their loan at any time. If the borrower adds $\eps$ collateral at time $t$, the health factor rises to $H_t = \frac{S_t(1 + \eps)}{e^{\alpha t}S_0/c + \beta}$, and so the new liquidation threshold is $S_t < (e^{\alpha t}S_0/c + \beta)\frac{c_0}{1+\eps}$. Like in the previous section, we assume here that agents can continuously monitor the market to decide when to top-up. However, our model as stated so far trivializes top-ups. We explain why in the next two subsections, and change the model to involve discounting and discrete monitoring.

\subsection{Discounting}
The first issue with the model is that agents are overly incentivized
to top-up. To see a simple example of this, consider an agent with a
very large budget. Such an agent, when taking out a loan, could simply
perform a large top-up right at the beginning to lower the liquidation
threshold significantly. Since the price of the collateral is modeled
by geometric Brownian motion (with drift matching the risk-free rate),
the value of the collateral, in expectation, stays the same over
time.\footnote{The \textit{price} of the collateral drifts up over
time, but the agents calculate the value of the collateral in today's
dollars when computing their utility.} Further, with continuous
monitoring, there is no risk that the collateral gets
liquidated. Thus, the top-up in itself is an action with zero net utility.
But it lowers the liquidation threshold greatly, so it
benefits the agent overall. 

There are several ways to fix this issue. One is to give the agent a
fixed budget for topping-up, and investigate how to best spend this
budget. This approach has some potential issues. First, it's not clear
how to model the opportunity cost of spending money on top-ups in the
fixed budget model. Second, in practice, it's unlikely that agents have
a fixed budget; agents can usually take out loans to stretch
their budget at some additional cost.  

We instead introduce a discount parameter $\delta$ into the
model. The value of $x$ dollars two years from now is $e^{-(r +
  \delta)}x$ today, rather than just $e^{-r}x$ (where $r$ is the
risk-free rate, as before). This parameter has several possible
interpretations. First, one can interpret it as representing the
opportunity cost the borrower suffers from storing collateral in the
lending pool, rather than having it available now. It could also be
interpreted as a present-bias parameter, representing the fact that
the borrower simply prefers $x$ dollars now, even when compared to
money of the same value ($xe^{rt}$) in the future.  With a discount
parameter, topping-up is no longer utility-neutral in a
vacuum. Depositing some collateral now to get the same amount back
later yields negative utility, so agents would only want to top-up as
much as is necessary to avoid liquidation (and may prefer liquidation
to very high top-ups). 

Another way to fix the issue would be to consider a cost of financing,
rather than a discount parameter. More specifically, assume that agent
starts with no money, but can borrow money from an external lender
(i.e., not the lending pool), at a continuously compounding interest
rate of $\delta$. Now $\delta$ represents the cost of financing, both
for the original collateral and for any additional top-up
collateral. We choose to introduce a discount parameter rather than a
financing parameter because the former makes more sense when
replicating our loan as an option. The financing parameter involves a
separate entity (a traditional lender offering loans, for example),
and so the $\delta$ parameter is not as easy to relate to the main
parameter we care about: $\alpha$. 

\subsection{Discretization}
The second issue with the model is that when agents want to top-up, they would prefer to add infinitesimal amounts of collateral. To see why, suppose that the price of the asset reaches the liquidation threshold, and that the borrower (with some discount parameter $\delta$) wants to avoid liquidation. How much should they top-up? With continuous monitoring, they always want to top-up as little as possible. If they top-up an infinitesimal amount, and the price continues falling, they can simply top-up more. And further top-ups will be \textit{cheaper}, both because the price is falling, and because the top-ups will occur in the future, and thus be discounted by their discount parameter. Since there is no risk of liquidation, agents would always want to top-up as little as possible.

In reality, borrowers are worried about the risk of liquidation, and don't want to top-up tiny amounts repeatedly. To model these realities more accurately, we switch to discrete time monitoring. Now, borrowers can only interact with their lending pool loans at discrete time intervals. At these intervals, they can decide to repay the loan or top-up. If the liquidation threshold is breached in between these intervals, the loan gets liquidated. With this change, borrowers would want to top-up larger amounts to tide them over to the next monitoring interval.

\subsection{Replication}
To replicate the loan with top-ups, we first describe a novel option,
the \textit{top-up} option. This option has all the features of a
perpetual American down-and-out call option. In addition, the borrower
of the option can perform a \textit{top-up} at any time, which is when
they give the seller $d$ stock in order to update the barrier to
$B/(1+D)$, where $B$ is the original barrier, and $D$ is the sum of
all the stock the seller has received through top-ups. The seller can
keep this stock if the barrier is breached, but must return the stock
if the option is exercised.  

\begin{theorem}
    \label{top-up-replication}
    A top-up option (as described above) with exercise price $E =
    e^{\alpha t}S_0/c + \beta$, value $v = (1-1/c)S_0$, and barrier $B
    = (e^{\alpha t}S_0/c + \beta)c_0$ can be used to replicate both
    the borrower and lender's utility for a dynamic, perpetual lending
    pool with top-ups and a fixed fee, even if borrowers have a
    discount factor and monitor the loan discretely. 
\end{theorem}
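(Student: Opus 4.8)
The plan is to follow the same template as \autoref{european_replication}, \autoref{american-replication}, and \autoref{fixed_fee_repl}: describe the cash flows available to the lending-pool borrower and to the lender, exhibit a trading strategy for the option buyer (resp.\ seller) that reproduces those cash flows at the same dates, and then observe that since both the subjective utilities and the fairness condition depend only on cash flows and their timing, the replication transfers. The only genuinely new ingredients relative to \autoref{fixed_fee_repl} are the top-up action, the discount parameter $\delta$, and discrete monitoring, so I would organize the proof around checking that each of these is faithfully mirrored by the top-up option. Unlike the earlier proofs I would \emph{not} try to pin down the borrower's optimal strategy (with discrete monitoring the borrower may be ``caught'' by the barrier between monitoring instants, so ``always repay at the liquidation threshold'' no longer holds); instead I would build a cash-flow-preserving correspondence between arbitrary borrower strategies and option-buyer strategies.

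For the borrower side, as in the earlier proofs the buyer begins by selling the initial $1$ \ETH\ for $S_0$ \USDC, buying the top-up option for $(1-1/c)S_0$, and holding the remaining $S_0/c$ \USDC. Whenever the lending-pool borrower would, at a monitoring instant $t_i$, add $\eps_i$ \ETH\ of collateral, the option buyer instead purchases $d_i=\eps_i$ \ETH\ on the frictionless spot market for $\eps_i S_{t_i}$ and submits it as a top-up; this costs exactly what the borrower's deposit costs, and it rescales the option's (time-varying) barrier from $B_t=(e^{\alpha t}S_0/c+\beta)c_0$ to $B_t/(1+D_i)$ with $D_i=\sum_{j\le i}\eps_j$, which coincides with the updated liquidation threshold $S_t<(e^{\alpha t}S_0/c+\beta)c_0/(1+D_i)$, since the health factor after cumulative top-up $D_i$ is $S_t(1+D_i)/(e^{\alpha t}S_0/c+\beta)$. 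Whenever the borrower repays at a monitoring instant $t$ with cumulative top-up $D$, the buyer exercises: both pay $e^{\alpha t}S_0/c+\beta$ and receive $1+D$ \ETH\ (the original collateral plus the returned top-ups, i.e.\ the underlying plus the returned top-up stock), sellable for $S_t(1+D)$. If the threshold (equivalently the barrier) is breached between monitoring instants, the loan is liquidated and the option is knocked out; in both worlds the agent keeps $S_0/c$ \USDC\ less the amounts already spent on top-ups, and the counterparty keeps the collateral / top-up stock. Discounting all of these flows at the borrower's rate $r+\delta$ gives identical utilities term by term; it is legitimate for the buyer to use $r+\delta$ while the seller uses $r$, since replication only requires that the cash flows and their dates agree.

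For the lender side, I would note, exactly as in \autoref{american-replication} and \autoref{fixed_fee_repl}, that the option seller runs the mirror strategy: convert the premium $(1-1/c)S_0$ to \ETH, buy an additional $1/c$ \ETH\ with $S_0/c$ of own funds so as to hold $1$ \ETH, and hold each batch of top-up stock as it arrives. On exercise at $t$ the seller returns the $1+D$ \ETH\ and collects $e^{\alpha t}S_0/c+\beta$, for utility $e^{-rt}(e^{\alpha t}S_0/c+\beta)-S_0/c$; on a barrier breach at $\tau$ the seller keeps and sells the $1+D$ \ETH, whose value at the post-top-up barrier is $(1+D)\big((e^{\alpha\tau}S_0/c+\beta)c_0/(1+D)\big)=(e^{\alpha\tau}S_0/c+\beta)c_0$, for utility $e^{-r\tau}(e^{\alpha\tau}S_0/c+\beta)c_0-S_0/c$. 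These match the lender's utilities from repayment and from liquidation respectively; in particular $D$ cancels in the liquidation value, just as in the lending pool, because liquidation occurs exactly when the total collateral value reaches $(e^{\alpha\tau}S_0/c+\beta)c_0$, so the borrower's top-ups delay liquidation but never enlarge the lender's recovery.

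I expect the main obstacle to be careful bookkeeping rather than any conceptual difficulty: one must verify that the ``hand over $d$ stock, rescale the barrier by $1/(1+D)$'' rule reproduces the health-factor update for \emph{every} monitoring time and \emph{every} top-up history, that the spot-market purchase the option buyer needs (because they sold their \ETH\ at the outset) costs precisely the borrower's deposit, that the discrete-monitoring restriction is imposed on the same decision set (repay, top-up) in both worlds while knock-out/liquidation stays continuously monitored in both, and that the discount rate $r+\delta$ is applied to the buyer and borrower alike. Once these correspondences are in place, equality of utilities — and hence of the ``no dominating portfolio'' fairness condition — follows exactly as in the earlier replication theorems.
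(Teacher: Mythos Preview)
Your proposal is correct and follows essentially the same approach as the paper: the paper's own proof simply states that discretization and discounting are trivial to port to the option world and that ``the replication is exactly the same as \autoref{fixed_fee_repl}, so we omit the full proof.'' Your write-up is in fact more careful than the paper's, since you explicitly verify that the top-up rule $B\mapsto B/(1+D)$ reproduces the health-factor update, that the cancellation of $D$ in the lender's liquidation recovery is what makes the seller's cash flows match, and that under discrete monitoring one should replicate cash flows strategy-by-strategy rather than presuppose ``repay at the threshold.''
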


Discretization and discounting are both trivial to translate to the option world; we simply assume that option participants discount by an additional factor of $\delta$, and that option buyers also monitor discretely, at the same rate as the borrowers. Neither of these assumptions changes the option parameters needed for replication, and the replication is exactly the same as \autoref{fixed_fee_repl}, so we omit the full proof. The main takeaway from this theorem is that we can use simulations to find the value of these top-up options, and then set that equal to the same value of $(1-1/c)S_0$.

\section{Finding fair interest rates via simulation}
In this section, we describe pricing simulations for our model of dynamic, perpetual loans with top-ups. As mentioned earlier, we replicate loans using options with value $v = (1 - 1/c)S_0$. We first describe the simulations in more detail. Then, we show how the value of the option depends on values of $r$, $\sigma$, $\delta$, and the monitor frequency. Finally, we compare our model's fair interest rates to Aave's interest rates through the last year, using historical data for the volatility and risk-free rate.

\subsection{Simulation details}
We first describe our simulation without top-ups. We simulate a large number of price paths, where each price path is a discrete approximation of the geometric Brownian motion of the stock price. We simulate over a large time period, as we cannot simulate a truly perpetual option; but due to discounting, and due to the fact that the exercise price and barrier rise over time, buyers do not want to hold the option for very long. The first challenge to simulation is determining how the buyers would behave in a given day (given that the option has not breached the barrier). Since geometric Brownian motion is a Markov process, we know that the buyers do not learn anything extra from the price history as compared to just the current price. If the exercise price and barrier were constant, they would base their decision to exercise the option solely on the current price. Call this their exercise threshold. 

However, the exercise price and barrier increase over time (proportional to $e^{\alpha t}$), so a fixed exercise threshold is not rational. We assume that the optimal exercise threshold takes the form $e^{\alpha t} S^*$, where $S^*$ is a constant, so the agent is more likely to exercise early, while the exercise price and barrier are lower. The first phase of our simulation involves approximating this $S^*$. We do this by linear search: we simply estimate the value extracted from the option for a range of different values of $S^*$, and choose the best one. We test this over 40,000 price paths to estimate $S^*$. The second phase of the simulation simply tests this threshold over new price paths (typically 200,000 price paths), and reports the average value. 

Simulating top-ups is the second major challenge, as they have two important parameters: when should agents top-up, and how much should they top-up. These parameters affect (and are affected by) the optimal exercise threshold. Here, we make two simplifying assumptions to help make the simulations tractable. First, we assume very simple top-up strategies for the agent. We typically assume that agents will top-up 0.1 more collateral when the price gets within 5\% of the barrier. Second, we scale down the exercise threshold by the amount $D$ of total collateral. That is, they exercise the option on day $t$ if $S_t > e^{\alpha t}S^* / D$.




\subsection{Option values as a function of input parameters}
Here, we measure the option buyer's value as a function of the volatility ($\sigma$), discount rate ($\delta$), risk-free rate ($r$), and monitor frequency. The buyer's value is used as a proxy for the interest rate.\footnote{This is because our simulation acts as an explicit function for the buyer's value, but only an implicit function for the fair interest rate. We explain this in more detail in the next subsection.} The base parameters for the simulation are $r = 0.03746$, $c = 1/0.805$, $c_0 = 1/0.83$, $S_0 = 100$, $\sigma = 0.46$, $\alpha = 0.0283$, $\delta = 0.005$, $\beta = 0.5$, and a monitor frequency of 10 times daily. These parameters represent real market conditions from February 2023, with lending-pool parameters ($\alpha$, $c$, and $c_0$) taken from Aave's February 2023 data. We simulate for $5$ years; due to our discount rate, the vast majority of options are exercised before this date. 

\begin{figure}
  \centering
  \subfloat{\includegraphics[width=0.48\textwidth]{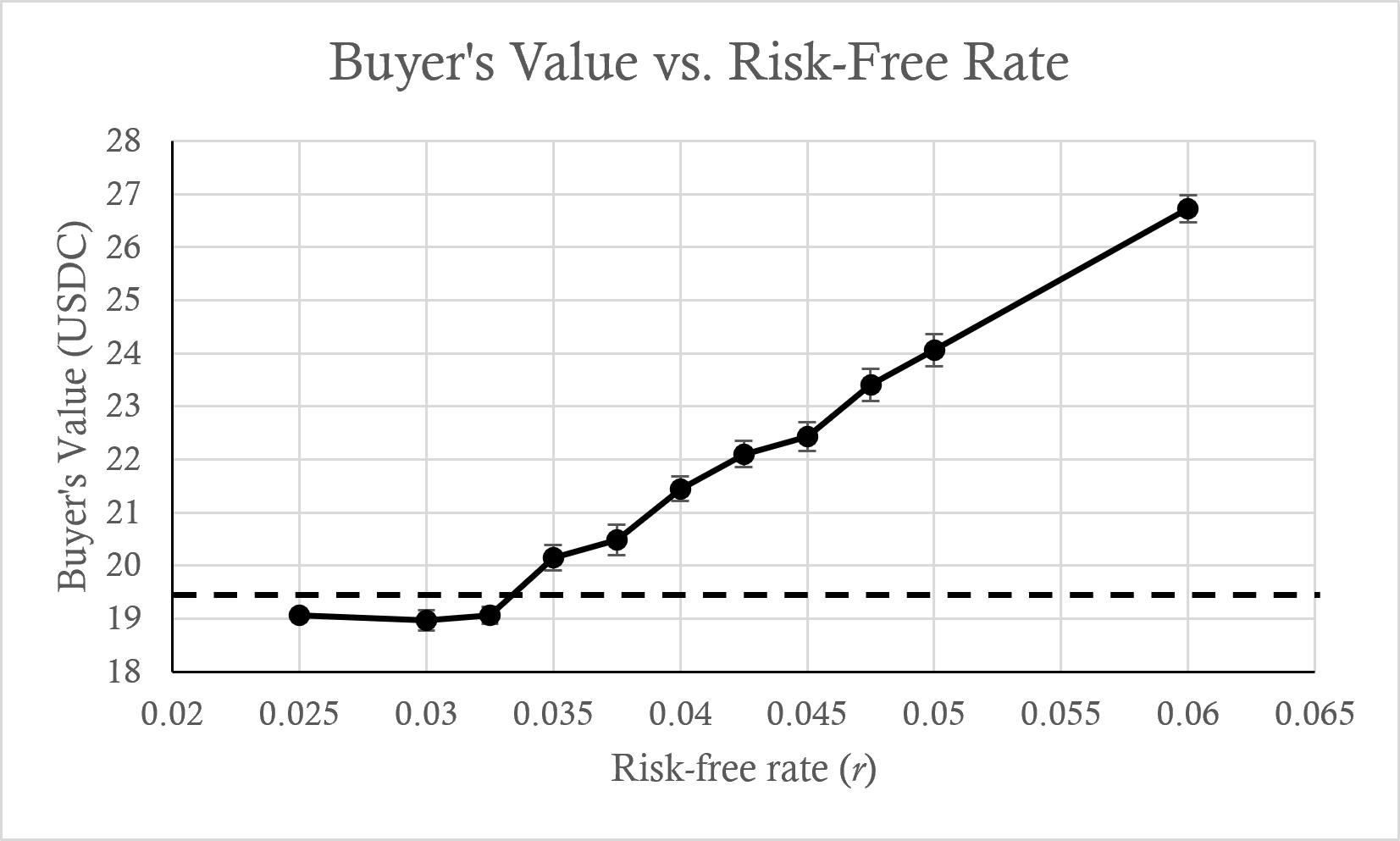}}
  \hfill
  \subfloat{\includegraphics[width=0.48\textwidth]{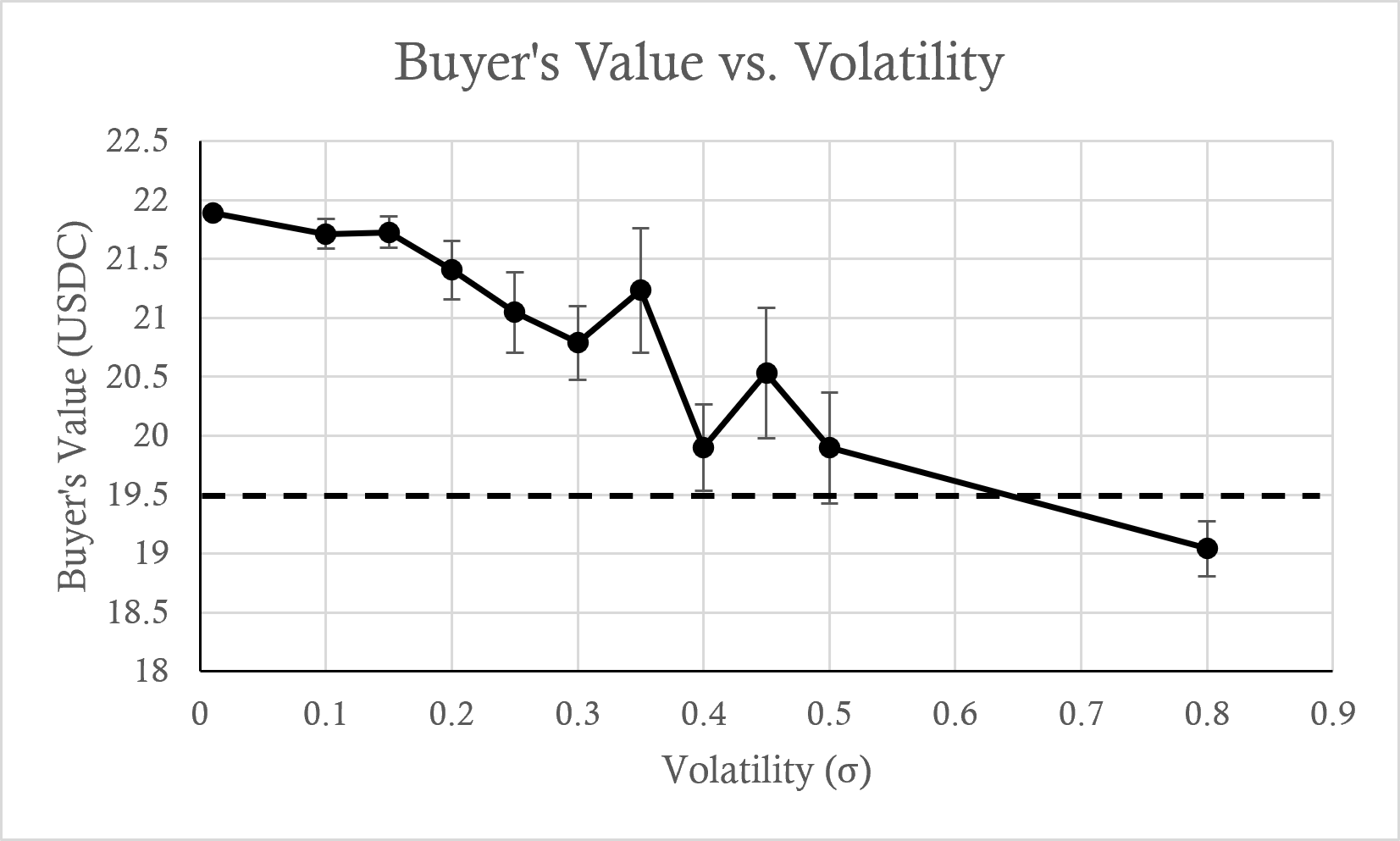}}
  \\
  \subfloat{\includegraphics[width=0.48\textwidth]{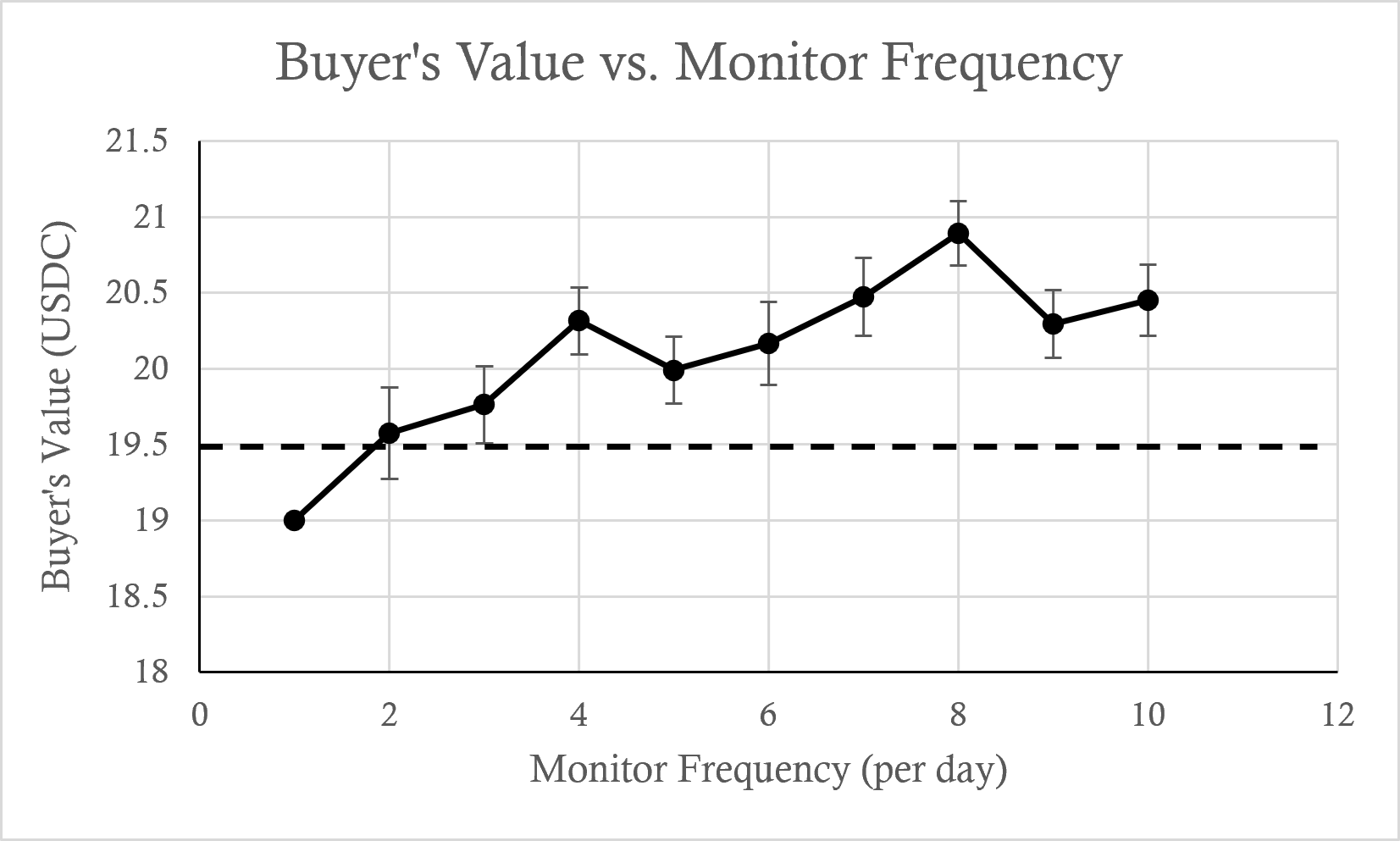}}
  \hfill
  \subfloat{\includegraphics[width=0.48\textwidth]{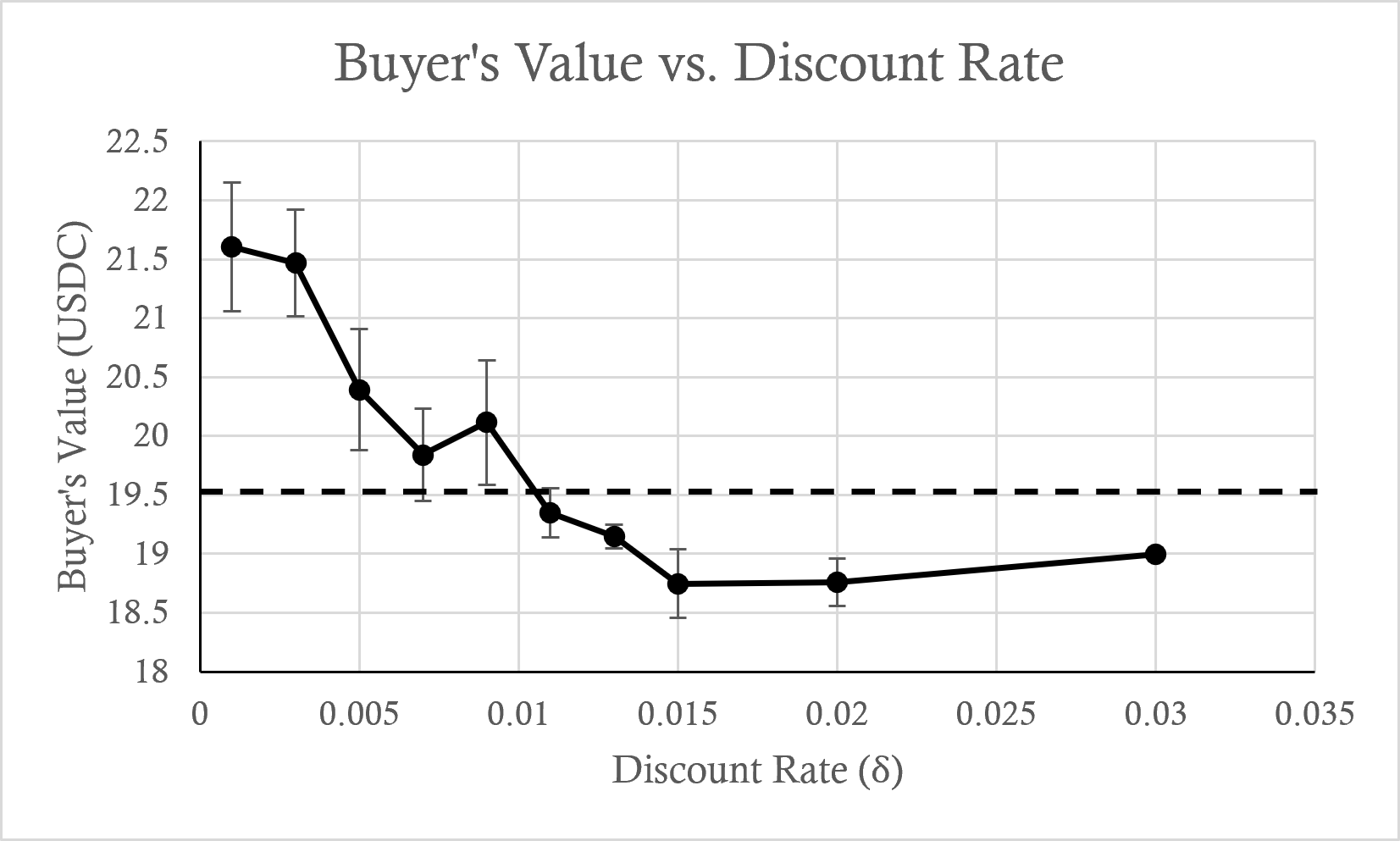}}
  \caption{These graphs show how buyer's expected value changes as different simulation parameters change. The base parameters for the simulation are $r = 0.03746$, $c = 1/0.805$, $c_0 = 1/0.83$, $S_0 = 100$, $\sigma = 0.46$, $\alpha = 0.0283$, $\delta = 0.005$, $\beta = 0.5$, and a monitor frequency of 10 times daily. The horizontal dashed line indicates the fair option value of $S_0(1-1/c) = 19.5$. The error bars represent the standard error of our test samples.}
  \label{fig:input_params}
\end{figure}

In \autoref{fig:input_params}, we plot our results. The dashed line indicates the option value of $S_0(1-1/c) = 19.5$ that is needed for replication. Thus, values above the dashed line indicate that a higher interest rate than $0.0283$ should be charged, while values below indicate that less interest should be charged (and values right on the line indicate fair lending-pool parameters). The top two graphs focus on market conditions, while the bottom two focus on buyer/borrower behavior. Note that, by our replication arguments, the buyer's utility equals the borrower's utility in the lending pool. We thus frame our discussion in terms of the borrower. 

We notice the following main trends. First, the rate of return $r$ is positively correlated with the borrower's value. Note that, in our model, $r$ equals the stock drift and informs how the borrower discounts future value (they discount at $r + \delta$). While these factors cancel out, the \textit{interest rate} stays constant in this graph. Thus, higher values of $r$ afford the borrower significantly higher utility. In fact, these graphs indicate that the borrower's value is most sensitive to $r$, out of all four parameters.

Second, borrowers prefer markets with lower volatility. This is fairly obvious, as lower volatility means that liquidations are less likely, and thus less money has to be spent on top-ups (which are utility-negative given a positive discount rate). However, there is not as much variation in the value for the range of realistic volatility parameters ($\sigma \in [0.2, 0.5]$ is representative of real market conditions from February 2023 to February 2024).

Third, relatively high monitor frequencies are necessary for borrowers. If borrowers monitored once a day, they would not even get positive value from the loan; as this increases to 8 times daily, they extract much higher value from the loan. The dip at the end could be simulation error, or it could represent how borrowers should edit their top-up strategies to be less conservative as they monitor more frequently. We did not do that; we used a simple top-up strategy of increasing the collateral by $0.1$ whenever the price was within $5\%$ of the liquidation threshold. 

Fourth, borrowers are also sensitive to the discount rate. As the discount rate increases to even moderate values like $0.015$, the borrower's utility quickly goes below the value needed for replication. As a result, we stick to a minor discount rate of $\delta = 0.005$. 

Overall, we model borrowers as having a discount rate of $0.005$ and monitoring 8 times daily. We also assume the pool has a fixed fee of $0.5$ \USDC. In the next subsection, we use these modeling choices to measure fair interest rates over the last year, using real market conditions. 

\subsection{Interest rates: model vs. real}
We now compare our model's proposed interest rates to Aave's real interest rates. We use Aave's \ETH\ liquidation parameter of $c_0 = 1/0.83$, their \ETH\ overcollateralization value of $c = 1/0.805$, and compare our model to their \USDC\ interest rates. We compute monthly averages for Aave's interest rates, getting the interest rate data directly from their app. For the market data, we use the 10-year U.S. treasury bond rate as the risk-free rate, and we estimate the annual \ETH\ volatility from 30-day averaged daily volatility.\footnote{Sources: risk-free rate: \url{https://www.macrotrends.net/2016/10-year-treasury-bond-rate-yield-chart}, Ethereum volatility: \url{https://buybitcoinworldwide.com/ethereum-volatility/}} Finally, we assume $\delta = 0.005$, $\beta = 0.5$, and that borrowers monitor 8 times daily. 

To compute the fair interest rate, given the other parameters in the model, we essentially perform a binary search over values of $\alpha$. It's easy to see that the option buyer's utility is decreasing in $\alpha$, so we simply perform our simulation for a candidate $\alpha$ and use the value output by the simulation to decide whether to increase or decrease $\alpha$. We repeat this until we are within $0.5\%$ of the fair option value, $S_0(1-1/c)$.\footnote{Note that the simulated buyer's value is proportional to $S_0$ as well, so $S_0$ is not an important parameter in our model.}

\begin{figure}
  \centering
  \includegraphics[width=0.85\textwidth]{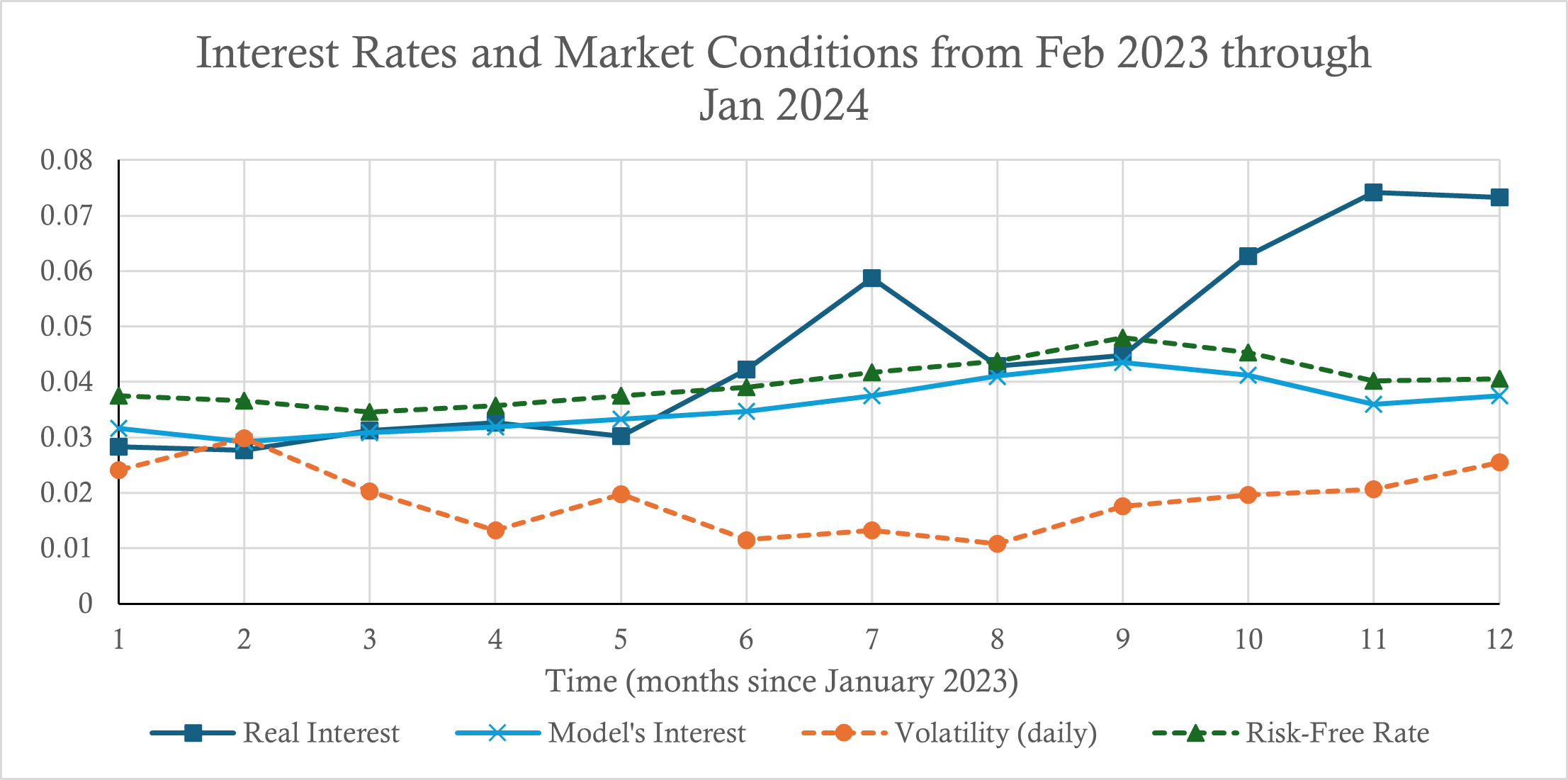}
  \caption{This graph compares our model's interest rates to the real interest rates in Aave's lending pool over the past year. The solid lines refer to the interest rates, and the dashed lines display the market conditions during each month.}
  \label{fig:final_model}
\end{figure}

\autoref{fig:final_model} summarizes our results. Our model's fair interest rates closely follow the risk-free rate, while the real interest rate experiences spikes in September 2023, December 2023, and January 2024. The Pearson correlation coefficient between the two models is $r = 0.575$ ($r^2 = 0.331$), with a p-value of $0.050$, indicating significant overlap between our interest rate model and practice. Thus, our model coheres closely with reality. While our model does predict lower interest rates on average, our model features a (modest) fixed fee. So, it's not clear whether the profits experienced by the lending pool would be lower -- that depends on the number of fees the lending pool collects, which depends on the duration of the average loan.

\begin{figure}
  \centering
  \subfloat{\includegraphics[width=0.48\textwidth]{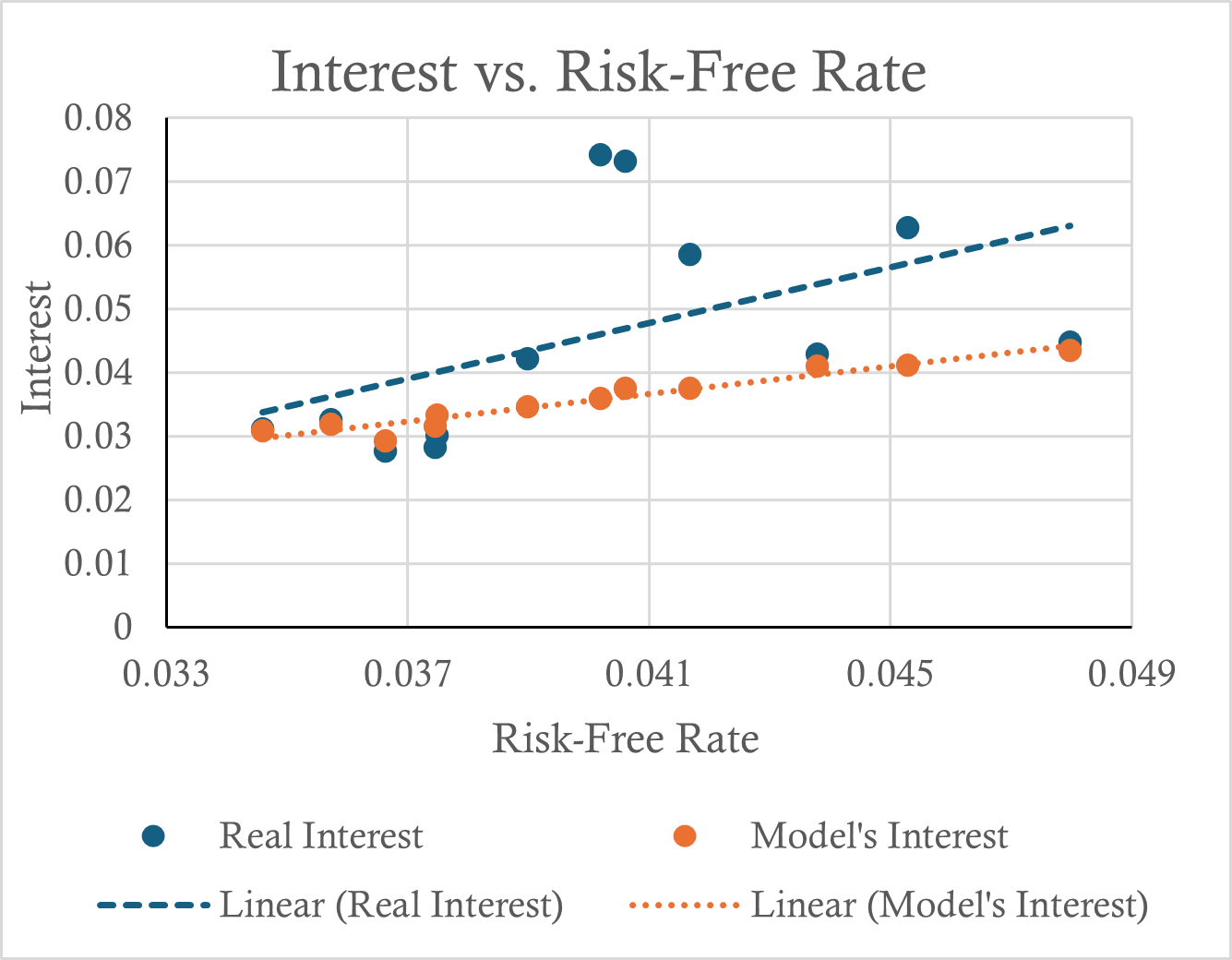}}
  \hfill
  \subfloat{\includegraphics[width=0.48\textwidth]{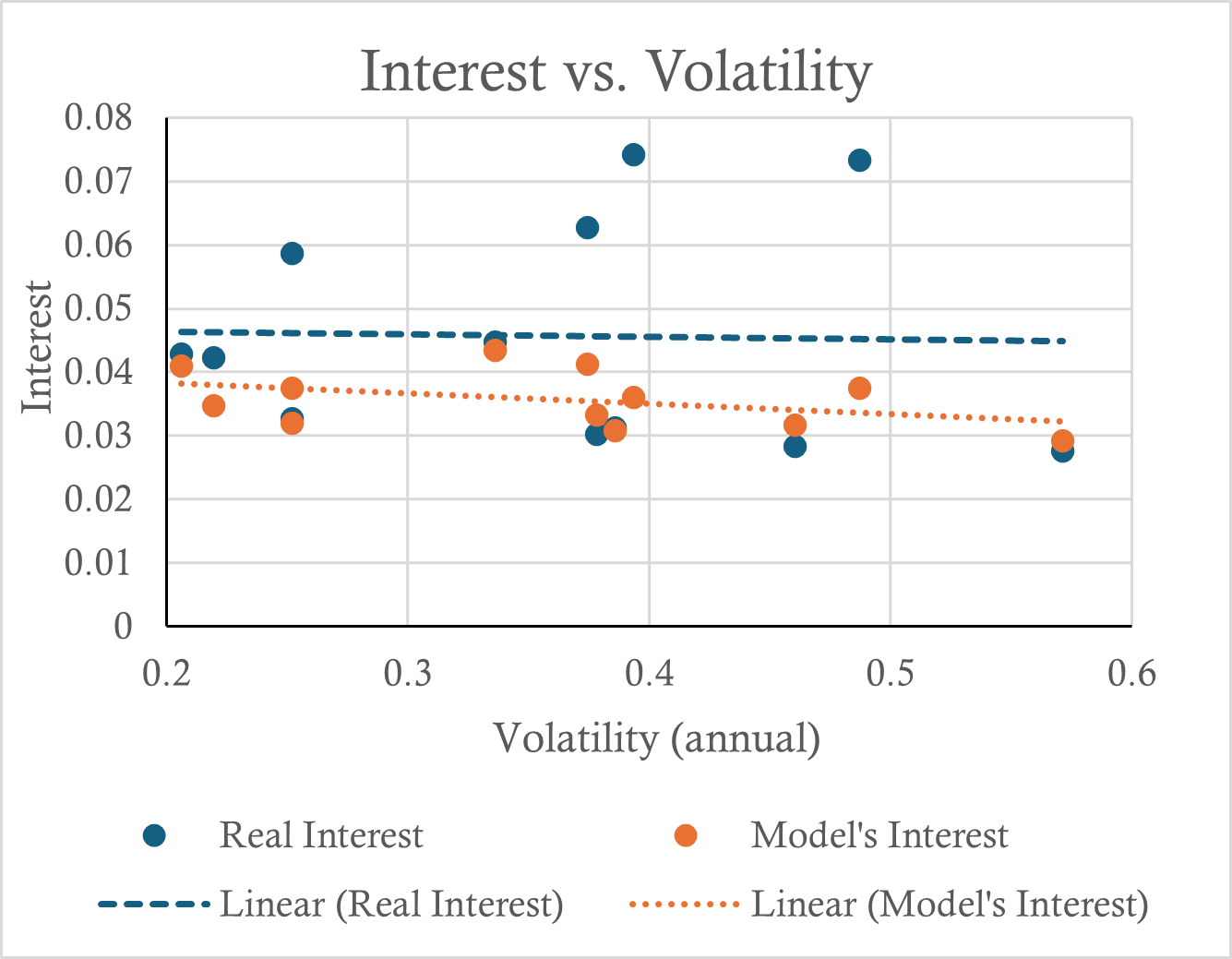}}
  \caption{This graph shows the relative impact of the risk-free rate and volatility. The dashed lines are the results of linear regressions.}
  \label{fig:final_model_parameters}
\end{figure}

\autoref{fig:final_model_parameters} plots the correlation between interest and the risk-free rate or volatility, for both our model and real interest rates. Our model is more tightly correlated with the risk-free rate ($r = 0.623, r^2 = 0.388732445, p = 0.030$) than the correlation between real interest rates and the risk-free rate ($r = 0.513, r^2 = 0.264, p = 0.088$). Further, real interest rates have little to no correlation to volatility (the $p$ value is $0.937$ on the $t$-test). Our model is at least weakly (negatively) correlated with volatility ($r = -0.401, r^2 = 0.161, p = 0.196$).

One might be tempted to conclude that real lending pools should also negatively correlate their interest rates with volatility. However, our model ignores the cost of liquidation for lenders, and in times of very high volatility, it's conceivable that lenders could lose significant assets due to market friction. So, the relationship between volatility and fair interest rates is more complicated in practice. Even if this cost to lenders is negligible in practice, the correlation of our model's interest rates with volatility is fairly weak. This could also help explain why existing lending pool are consistently popular, despite not lowering their interest rates in very volatile times. 

We emphasize that our results in this section should not be taken as concrete recommendations for lending pools. In practice, lending pools should at least estimate borrowers' average discount rates. Lending pools could also benefit from computing the trade-offs in profit for high fixed fees vs. high interest rates. The point of this section is to show how our model can serve as a practical starting point for more principled interest rate calculations. 


\section{Conclusion}
We investigated several models for finding fair lending pool interest rates using option replications. In a simplified model of fixed-term lending pools, we obtained analytic pricing results. But we found that in the more realistic model of dynamic lending pools, no fair parameters existed. We show that similar impossibility results hold in models where the borrowers get more or the lenders get less. To solve this issue, we added fixed fees to the dynamic lending pool model. We also added the ability for borrowers to top-up their loan, which required an extra discount parameter as well as a discrete monitoring paradigm. We then ran some simulations based on our final model, and compared our model's interest rates to real interest rates published by Aave over the last year, showing that our model is a good starting point for principled interest rate calculations. 

One advantage of using simulations to find fair interest rates is \textit{flexibility}. For instance, say that a subset of borrowers would be willing to pay more interest in exchange for lower overcollateralization rates (for example, those borrowers might want to more heavily increase their exposure to \ETH). Or some borrowers might want to pay less interest in exchange for a stricter liquidation threshold. Using simulations to estimate the fair interest rates allow for lending pools to offer \textit{dynamic} interest rates, which might vary depending on what lending-pool parameters the borrower wants. The Aave protocol already has separate borrowing modes depending on which currencies are being trading, so this approach would be a reasonable extension. 

We also identify several promising directions for further study. First, we used very simple option buyer strategies when building our simulations; the buyer's top-up strategy in particular was very simple. So one open question is to determine the optimal strategy for the option buyer in our model, given a set of model parameters.

Second, as we discussed in Section 4, our work focuses on identifying the \textit{fair} interest rate, which is the interest rate such that neither borrower nor lender can execute a series of trades with strictly higher utility. In practice, lenders may not be willing to trade their USDC for ETH. They may be risk-averse, or sensitive to transaction fees. For another example, financial institutions are often required to hold a certain amount of cash; perhaps stablecoins like \USDC\ may count towards this requirement in the future. In such settings, one might be more interested in \textit{revenue-maximizing} parameters, where one compares the revenue of loaning \USDC\ through the lending pool to simply holding \USDC\ and receiving the risk-free rate. Answering this question would also require one to look at the borrower's outside options more explicitly (to ensure their participation), a problem which we sidestep in the present work through option replication. 

Finally, even if one is interested in fair interest rates, there are other dynamic, perpetual loan models that don't involve adopting a fixed fee. For example, prospect theory is a popular behavioral economics theory that states that agents are risk-seeking when considering choices leading to losses but risk-averse when considering choices leading to gains \cite{kahneman1979prospect}. This can be applied to lending pools, since lenders experience more steady gains while borrowers are exposed to more loss. Thus, lenders may be more risk-averse while borrowers are risk-seeking. This could get around the impossibility result of \autoref{thm:perpetual_negative}. 

\bibliographystyle{ACM-Reference-Format}
\bibliography{ec2024/refs}

\end{document}